\documentclass{llncs}
\pdfoutput=1

\newif\iffullversion
\fullversiontrue
\fullversiontrue

\usepackage{etoolbox}
\AtEndEnvironment{proof}{\qed}

\usepackage{hyperref}
\hypersetup{
    colorlinks=true, %
    linkcolor=blue,  %
    citecolor=blue,  %
    urlcolor=blue    %
}

\usepackage[utf8]{inputenc}
\usepackage[T1]{fontenc}
\usepackage{amsmath,amsfonts,amssymb,mathtools}
\usepackage{graphicx}
\usepackage{subcaption}

\renewcommand{\figurename}{Figure}
\usepackage{tikz-cd}
\usetikzlibrary{calc}
\usepackage{booktabs}

\newcommand{\Match}{\mu}
\newcommand{\Proj}[2]{{%
  \left.\kern-\nulldelimiterspace%
  #1%
  \vphantom{\big|}%
  \right|_{#2}%
}}
\newcommand{\Block}[2]{\operatorname{block}(#1, #2)}
\newcommand{\Res}[2]{\operatorname{res}(#1, #2)} %
\newcommand{\SAT}{(2,2)\mbox{-E3-SAT}}
\newcommand{\Prop}{\phi}
\newcommand{\HR}{\mbox{HR}}
\newcommand{\HRS}{\mbox{HRS}}
\newcommand{\HRSP}{\mbox{HRS-P}}
\newcommand{\HRSPS}{\mbox{HRS-PS}}
\newcommand{\HRSFC}{\mbox{HRS-FC}}
\newcommand{\HRSC}{\mbox{HRS-C}}
\newcommand{\HRIC}{\mbox{HRIC}}

\newcommand{\Size}[1]{\operatorname{size}(#1)}
\renewcommand{\Cap}[1]{\operatorname{cap}(#1)}
\newcommand{\Hosps}[1]{\operatorname{hosps}(#1)}
\newcommand{\Bdry}[3]{\operatorname{bdry}(#1, #2, #3)}
\newcommand{\BijFn}{\varphi}
\newcommand{\Bij}[2]{\BijFn(#1, #2)}

\usepackage{color}

\makeatletter
\@addtoreset{lemma}{section}          %
\makeatother

\iffullversion
\newenvironment{deferproof}[1]{\begin{proof}}{\end{proof}}
\else
\usepackage{environ}
\newcommand{\deferredProofContent}{}
\newtoks\DeferProofBodyToks
\NewEnviron{deferproof}[1]{%
  \DeferProofBodyToks=\expandafter{\BODY}%
  \xdef\deferredProofContent{%
    \unexpanded\expandafter{\deferredProofContent}%
    \unexpanded{\begingroup\def\proofname{Proof of #1}\begin{proof}}%
    \unexpanded\expandafter{\the\DeferProofBodyToks}%
    \unexpanded{\end{proof}\endgroup}%
  }%
}

\fi

\begin{document}

\title{Hospitals/Residents with Inseparable Couples: Finding a
  Coalition-Stable Assignment Is NP-Hard}

\author{Zeyuan Hu\inst{1} \and
  C. Gregory Plaxton \inst{2}
}
\institute{University of California, Los Angeles
\email{(zeyuanhu@cs.ucla.edu)}\\ \and
University of Texas at Austin\\
\email{(plaxton@cs.utexas.edu)}}
\maketitle
\begin{abstract}
In recent work on course allocation, Rodr\'{i}guez and Manlove
consider the complexity of finding a stable assignment under four
notions of stability, including two coalitional notions. In one case,
which they call pair-size stability, they show that a stable
assignment always exists and they provide a polynomial-time algorithm
to find one. In a second case, called pair stability, they observe
that an earlier NP-hardness result of McDermid and Manlove holds for a
special case of course allocation called \emph{Hospitals/Residents
with Sizes} ($\HRS$). In a third case, called first-coalition
stability, they use a reduction from $\HRS$ to show it is NP-hard to
find a stable assignment. They leave open the complexity of finding a
stable assignment under so-called coalition stability. Building on
ideas from McDermid and Manlove, we resolve the open problem of
Rodr\'{i}guez and Manlove by showing that it is NP-hard to find a
coalition-stable assignment for $\HRS$. Indeed, our proof shows that
the problem remains NP-hard when the hospital capacities and resident
sizes are at most two. Accordingly, our NP-hardness result applies to
the special case of $\HRS$ known as \emph{Hospitals/Residents with
Inseparable Couples} ($\HRIC$). Finally, we introduce a novel and
natural notion of coalitional stability for both $\HRS$ and course
allocation, and we show that our NP-hardness result extends to this
notion, which we call unitwise-coalition stability.

\keywords{stable matching \and hospitals/residents \and course allocation \and
coalition stability \and NP-hardness}
\end{abstract}

\section{Introduction}
\label{sec:intro}

We begin by defining a well-known ``hospitals and residents'' variant
of the classic stable marriage problem~\cite{Gale1962}, which we
denote $\HR$.\footnote{In~\cite{Gale1962}, this is referred to as the
college admissions problem.} We are given a set of hospitals and a set
of residents; the goal is to determine a ``good'' assignment of residents
to hospitals. Each resident (resp., hospital) has strict preferences
over the set of hospitals (resp., residents) that it finds
acceptable.\footnote{Since we will only consider assigning resident
$r$ to hospital $h$ if $h$ is acceptable to $r$ and $r$ is acceptable
to $h$, we can assume without loss of generality that $h$ appears on
the preference list of $r$ if and only if $r$ appears on the
preference list of $h$.} Each hospital $h$ has a positive integer
capacity that represents the maximum number of residents that can be
assigned to $h$. An assignment for a given $\HR$ instance is a set
$\Match$ of resident-hospital pairs (if $(r,h)$ belongs to $\Match$,
then $\Match$ is said to assign resident $r$ to hospital $h$) subject
to the following constraints.
\begin{enumerate}
  \item Acceptability constraints. If $(r,h)$ belongs to $\Match$, then $h$ is
    acceptable to $r$ and $r$ is acceptable to $h$.
  \item Resident degree constraints. Any resident appears in at most
    one pair in $\Match$.
  \item Hospital capacity constraints. The number of pairs in $\Match$
    involving a hospital $h$ is at most the capacity of $h$.
\end{enumerate}
A resident $r$ is said to be unassigned under $\Match$ if no pair in
$\Match$ involves $r$. For any assignment $\Match$ and any hospital
$h$, we define $\Res{\Match}{h}$ as $\{r\mid (r,h) \in \Match\}$. A
hospital $h$ is said to be under-subscribed under $\Match$ if
$|\Res{\Match}{h}|$ is less than the capacity of $h$. A
resident-hospital pair $(r,h)$ is said to be a blocking pair for an
assignment $\Match$ if (1) $r$ is unassigned under $\Match$ or $r$
prefers $h$ to its assigned hospital under $\Match$ and (2) $h$ is
under-subscribed under $\Match$ or $h$ prefers $r$ to its least
preferred assigned resident under $\Match$. An assignment is said to
be stable if it does not admit a blocking pair.

In this paper we are concerned with a generalization of the $\HR$
problem called \emph{Hospitals/Residents with Sizes} ($\HRS$)
introduced by McDermid and Manlove~\cite{McDermid2010}. In this
generalization, each resident has a positive integer size and the
hospital capacity constraints are modified to require that the total
size of the residents assigned to a hospital $h$ is at most the
capacity of $h$. McDermid and Manlove consider a resident-hospital
pair $(r,h)$ to be a blocking pair for an assignment $\Match$ if (1)
$r$ is unassigned under $\Match$ or $r$ prefers $h$ to its assigned
hospital under $\Match$ and (2) the size of $\{r' \in
\Res{\Match}{h}\mid \mbox{$h$ prefers $r$ to $r'$}\}$ is at least the
size of $r$ minus the unused capacity of $h$ under $\Match$. As for
the $\HR$ problem, an assignment is said to be stable if it does not
admit a blocking pair.

Alternative notions of stability have been proposed for
$\HRS$. Adopting the terminology of Rodr\'{i}guez and
Manlove~\cite{Rodriguez2025}, we refer to the stability notion
investigated by McDermid and Manlove~\cite{McDermid2010} as pair
stability. In the concluding remarks of their paper, McDermid and
Manlove propose a second notion of stability for $\HRS$ in which
condition (2) above is replaced by ``there exists a subset of $\{r'
\in \Res{\Match}{h}\mid \mbox{$h$ prefers $r$ to $r'$}\}$ with total
size at most the size of $r$ and at least the size of $r$ minus the
unused capacity of $h$ under $\Match$.''  Rodr\'{i}guez and Manlove
refer to a resident-hospital pair $(r,h)$ satisfying (1) and the
modified version of (2) as a size-blocking pair. The associated notion
of stability (i.e., absence of size-blocking pair) has subsequently
been referred to as pair-size stability~\cite{Rodriguez2025},
occupancy-stability~\cite{Balasundaram2025}, and
size-stability~\cite{Delacretaz2019}.\footnote{Delacr\'{e}taz studies
the special case where the resident sizes belong to $\{1,2\}$.} In the
present paper, we use the term pair-size stability.

Rodr\'{i}guez and Manlove~\cite{Rodriguez2025} study a generalization
of $\HRS$ to the setting of course allocation, where the goal is to
assign students to courses. Unlike $\HRS$, where each resident can be
assigned to at most one hospital, a student can take many courses and
a course can be taken by many students.\footnote{We can model an
$\HRS$ instance $I$ as a course allocation instance $I'$ as follows:
for each size-$k$ resident in $I$, we add a unit-capacity $k$-credit
course to $I'$; for each capacity-$k$ hospital in $I$, we add a
student to $I'$ with credit limit $k$.} Rodr\'{i}guez and Manlove
prove that a pair-size stable assignment always exists in the course
allocation setting, and they provide a polynomial-time algorithm for
finding such an assignment. Independently, Balasundaram et
al.~\cite{Balasundaram2025} obtain the same result for pair-size
stability (which they call occupancy stability) in the $\HRS$ setting.

Rodr\'{i}guez and Manlove propose and investigate two ``coalitional''
notions of stability for course allocation, which they call
first-coalition stability and coalition stability. For these notions a
blocking coalition is given by a set of courses and a student, as
opposed to a course-student pair. These coalitional notions are also
applicable to the $\HRS$ special case of course allocation. Using
notation similar to that introduced by Rodr\'{i}guez and Manlove, we
define $\HRSC$ (resp., $\HRSP$, $\HRSPS$, and $\HRSFC$) as the problem
of finding a coalition-stable (resp., pair-stable, pair-size-stable,
first-coalition-stable) assignment for a given $\HRS$ instance, or
reporting correctly that no such assignment exists. McDermid and
Manlove show that $\HRSP$ is NP-hard~\cite{McDermid2010}. As discussed
earlier, Rodr\'{i}guez and Manlove provide a polynomial-time
algorithm for finding a pair-size stable assignment in the course
allocation setting (which generalizes $\HRSPS$). Rodr\'{i}guez and
Manlove show that $\HRSFC$ is NP-hard, and leave open the complexity
of $\HRSC$.

We now define the two coalitional notions of stability introduced by
Rodr\'{i}guez and Manlove~\cite{Rodriguez2025}. Given the NP-hardness
results discussed in the previous paragraph, we find it sufficient to
define these notions in the restricted setting of $\HRS$, as opposed
to the more general course allocation setting. Given an assignment
$\Match$ for an $\HRS$ instance, Rodr\'{i}guez and Manlove define a
blocking (resp., first-blocking) coalition for $\Match$ as a pair
$(R,h)$ where $R$ is a set of residents and $h$ is a hospital such
that (1) for all $r$ in $R$, $r$ is unassigned under $\Match$ or $r$
prefers $h$ to its assigned hospital under $\Match$, and (2) there
exists a subset of $\{r' \in \Res{\Match}{h}\mid \mbox{$h$ prefers $r$
  to $r'$ for all (resp., some) $r$ in $R$}\}$ with total size at most
the total size of $R$ and at least the total size of $R$ minus the
unused capacity of $h$ under $\Match$.\footnote{Our definitions of the
terms ``blocking coalition'' and ``first-blocking coalition'', as well
as our earlier definitions of the terms ``blocking pair'' and
``size-blocking pair'', are formulated somewhat differently than in
earlier work, but are equivalent to the original definitions.}

Rodr\'{i}guez and Manlove~\cite{Rodriguez2025} consider six different
problems (e.g., testing whether a given assignment is stable, finding
a stable assignment) with respect to each of the four notions of
stability discussed above. For some of the resulting 24 problem
combinations, there is both an existence question (i.e., whether a
stable assignment is guaranteed to exist) and a complexity question
(i.e., what is the complexity of finding a stable assignment if one
exists) to resolve. Rodr\'{i}guez and Manlove resolve the questions
associated with all 24 problem combinations, except they leave open
the question of whether a coalition-stable assignment is guaranteed to
exist for any course allocation instance, along with the associated
complexity question.\footnote{Regarding the problem of testing whether
  a given assignment is coalition-stable, Rodr\'{i}guez and Manlove
  establish co-NP-completeness in the context of course allocation;
  their proof also applies to the $\HRS$ special case.} In the present
paper, we resolve these questions by exhibiting an $\HRS$ instance
with four residents and three hospitals that does not admit a
coalition-stable assignment, and by proving that $\HRSC$ is
NP-hard.\footnote{In~\cite{Rodriguez2025}, Rodr\'{i}guez and Manlove
  also consider problem variants with so-called ``downward-feasible
  constraints''. Our hardness results hold even in the absence of such
  constraints.}

We show that $\HRSC$ is NP-hard using the same high-level approach as
McDermid and Manlove~\cite[Theorem 3.8, erratum]{McDermid2010} used to
show that $\HRSP$ is NP-hard. More specifically, we also use a
reduction from a special case of 3-SAT called $\SAT$; this
satisfiability variant is formally defined in
Section~\ref{sec:prelim}. The transformation of~\cite{McDermid2010}
from $\SAT$ to $\HRSP$ is presented in a somewhat monolithic manner:
parameterized sets of agents (residents and hospitals) are listed
along with the associated preference lists~\cite[Figure 1,
  erratum]{McDermid2010}. At the same time, like many transformations
from 3-SAT variants, the transformation admits a gadget-based
interpretation: for each variable (resp., clause) in the given $\SAT$
instance, the resulting $\HRSP$ instance includes a set of agents
inducing a copy of a variable (resp., clause) gadget. Moreover, the
variable and the clause gadgets are implicitly constructed from
smaller gadgets. The variable gadget is constructed from a smaller
gadget that admits exactly two stable assignments; in the present
paper, we refer to this smaller gadget as a bistable gadget. The
clause gadget is constructed using three copies of a smaller gadget,
one for each literal in the clause; in the present paper, we refer to
this smaller gadget as a literal gadget. The literal gadget is in turn
constructed from a smaller gadget that does not admit a stable
assignment; in the present paper, we refer to this smaller gadget as
an unstable gadget.

We find it useful to present our transformation in a modular and
explicitly gadget-based style. To aid the reader's intuition, we
present a diagram of each gadget using the digraph representation
proposed by Balinski and Ratier~\cite{Balinski1997} and subsequently
employed in various works (see,
e.g.,~\cite{Baiou2000,Baiou2002,Baiou2007}); in the present paper, we
refer to such a diagram as a preference digraph.
\iffullversion
To better illustrate the relationship between our gadgets and those
of~\cite{McDermid2010}, in Appendix~\ref{sec:hrs-gadgets}, we present
preference digraphs for the gadgets in~\cite{McDermid2010}.
\fi

We now highlight the main new ideas underlying our gadget-based
transformation from $\SAT$ to $\HRSC$. A key difference between our
setting and that of~\cite{McDermid2010} is that a blocking coalition
involving a hospital $h$ cannot propose to reduce the total size of
the set of residents assigned to $h$. This seems to necessitate a more
complicated construction than in~\cite{McDermid2010} for the unstable,
literal, and bistable gadgets.
\iffullversion
For the unstable (resp., literal,
bistable) gadget, one may compare the preference digraphs of
Figures~\ref{fig:unstable-gadget} and~\ref{fig:hrs-unstable-gadget}
(resp., Figures~\ref{fig:literal-gadget}
and~\ref{fig:hrs-literal-gadget}, Figures~\ref{fig:bistable-gadget}
and~\ref{fig:hrs-bistable-gadget}).

\fi
In developing our variable gadget from our bistable gadget as in the
preference digraph of Figure~\ref{fig:variable-gadget}, we noticed
that the corresponding construction in~\cite{McDermid2010} can be
streamlined via the same approach.
\iffullversion
The preference digraph of Figure~\ref{fig:hrs-variable-gadget} in
Appendix~\ref{sec:hrs-gadgets} shows the non-streamlined variable
gadget implicit in~\cite{McDermid2010}, which has two more residents
and two more hospitals than the streamlined version.
\else
The streamlined construction reduces the number of residents from 10
to 8 and the number of hospitals from 6 to 4.
\fi

While our literal gadget differs from that of~\cite{McDermid2010}, it
satisfies similar abstract properties such that the construction of
our clause gadget from three copies of the literal gadget can follow
precisely the same pattern as is implicit
in~\cite{McDermid2010}. Likewise, our variable and clause gadgets
satisfy similar abstract properties to those satisfied by the
corresponding gadgets of~\cite{McDermid2010}, such that our overall
construction of the $\HRSC$ instance corresponding to a given $\SAT$
instance follows the pattern implicit in~\cite{McDermid2010}. An
advantage of our modular gadget-based presentation is that individual
modules may be reusable in proofs of other similar results. In
Section~\ref{sec:unitwise}, we discuss a natural coalitional notion of
stability that is different from the two notions (first-coalition
stability and coalition stability) defined earlier, and we explain why
our NP-hardness result related to coalition stability also applies to
this third notion, which we call unitwise-coalition stability.

Like the NP-hardness proof of McDermid and Manlove for $\HRSP$, our
NP-hardness proof for $\HRSC$ involves $\HRS$ instances where the
resident sizes and hospital capacities belong to $\{1,2\}$. Because
the resident sizes belong to $\{1,2\}$, these NP-hardness results
apply to the important special case of $\HRS$ called
\emph{Hospitals/Residents with Inseparable Couples} ($\HRIC$).

As in McDermid and Manlove~\cite{McDermid2010}, our NP-hardness proof
involves $\HRSC$ instances with incomplete preferences, that is, where
only a subset of the residents (resp., hospitals) are acceptable to a
given hospital (resp., resident). Indeed, our construction only
requires preference lists of length 3 (resp., 4) for the residents
(resp., hospitals). The reader may wonder whether our NP-hardness
result continues to hold for $\HRSC$ instances with complete
preferences. In Section~\ref{sec:complete}, we give a simple reduction
showing that $\HRSC$ remains NP-hard in the special case of complete
preferences, even when the resident sizes and hospital capacities
continue to belong to $\{1,2\}$.

In the foregoing, we have focused our discussion of the prior
literature on results that are directly relevant to our NP-hardness
proof. For a broader overview of two-sided assignment problems with
preferences, we refer the reader to Rodr\'{i}guez and
Manlove~\cite{Rodriguez2025}.

\iffullversion
\else
Due to space limitations, some lemmas (Lemmas~\ref{lem:outward}
and~\ref{lem:inward}, and all lemmas in Section~\ref{sec:nph}) are
stated without proof; proofs may be found in the full
version~\cite{Hu2026}. The full version also includes two appendices:
Appendix~A provides a tabular representation of the construction of
Section~\ref{sec:transform}; Appendix~B presents preference digraphs
for the gadgets implicit in the construction
of~\cite[erratum]{McDermid2010}.
\fi

\section{Preliminaries}
\label{sec:prelim}

We now give a formal definition of the $\HRSC$ problem discussed in
Section~\ref{sec:intro}. The input is an $\HRS$ instance, which we now
describe. An $\HRS$ instance $I$ consists of a set of residents, where
each resident $r$ has a positive integer size, denoted $\Size{r}$, and
a set of hospitals, denoted $\Hosps{I}$, where each hospital $h$ has a
positive integer capacity, denoted $\Cap{h}$. For any set of residents
$R$, we define $\Size{R}$ as $\sum_{r \in R} \Size{r}$. Each resident
(resp., hospital) has a strict preference list over the set of
hospitals (resp., residents) that it considers to be acceptable. An
assignment $\Match$ for such an instance $I$ is a set of
resident-hospital pairs subject to certain constraints to be defined
below. If the resident-hospital pair $(r,h)$ belongs to an assignment
$\Match$, we say that $r$ is assigned to $h$ under $\Match$. As in
Section~\ref{sec:intro}, we define $\Res{\Match}{h}$ as $\{r\mid (r,h)
\in \Match\}$. An assignment $\Match$ is required to satisfy the
following constraints.
\begin{enumerate}
  \item Acceptability constraints. If $(r,h)$ belongs to $\Match$,
    then $h$ is acceptable to $r$ and $r$ is acceptable to $h$.
  \item Resident degree constraints. Any resident appears in at most
    one pair in $\Match$.
  \item Hospital capacity constraints. For any hospital $h$, we have
    $\Size{\Res{\Match}{h}} \le \Cap{h}$.
\end{enumerate}
We assume without loss of generality that for any resident $r$ and
hospital $h$, $r$ considers $h$ to be acceptable if and only if $h$
considers $r$ to be acceptable. 

Given such an $\HRS$ instance $I$ and an assignment $\Match$ for $I$,
we say that a set of residents $R$ and a hospital $h$ form a blocking
coalition $(R,h)$ for $\Match$ in $I$ if the following conditions
hold: (1) for all $r$ in $R$, $r$ is unassigned under $\Match$ or $r$
prefers $h$ to its assigned hospital under $\Match$, and (2) there
exists a subset $R'$ of $\{r' \in \Res{\Match}{h}\mid \mbox{$h$
  prefers $r$ to $r'$ for all $r$ in $R$}\}$ such that
\[
\Size{R} - \Cap{h} + \Size{\Res{\Match}{h}} \le \Size{R'} \le \Size{R}.
\]
For any $\HRS$ instance $I$ and any assignment $\Match$ for $I$, we
define $\Block{I}{\Match}$ as the set of all blocking coalitions for
$\Match$ in $I$, and we say that $\Match$ is coalition-stable for $I$
if $\Block{I}{\Match} = \emptyset$. Given an $\HRS$ instance $I$ as
input, the $\HRSC$ problem asks us to either find a coalition-stable
assignment for $I$, or to report correctly that no such assignment
exists. The decision version of $\HRSC$ asks whether a given $\HRS$
instance admits a coalition-stable assignment.

Our main result is to show that $\HRSC$ is NP-hard. We obtain this
result by presenting a polynomial-time transformation from the known
NP-complete problem $\SAT$ to the decision version of $\HRSC$. The
problem $\SAT$ is a special case of the well-known 3-SAT problem that
was shown to be NP-complete by Berman et al.~\cite[Theorem~1]{Berman}
(where $\SAT$ is referred to as (3,B2)-SAT). The input is a 3-CNF
formula, that is, a propositional formula that is the conjunction of a
number of clauses, where each clause is a disjunction of three
distinct literals, and each literal is a variable or the negation of a
variable. In addition, we require that each variable occurs exactly
four times, twice in negated form and twice in positive (i.e., not
negated) form. Given this requirement, any instance of $\SAT$ involves
$3n$ variables and $4n$ clauses for some positive integer $n$.

Let $\Prop$ be an instance of $\SAT$ and let $n$ be a positive integer
such that $\Prop$ has $3n$ variables and $4n$ clauses. We index the
variables from 1 to $3n$ and the clauses from $1$ to $4n$. For any $j$
in $\{1,\ldots,3n\}$ and any $k$ in $\{1,2,3,4\}$, we define literal
$(j,k)$ as follows: if $k = 1$ (resp., $k=2$) then it is the first
(resp., second) positive instance of variable $j$ in $\Prop$; if $k =
3$ (resp., $k=4$) then it is the first (resp., second) negative
instance of variable $j$ in $\Prop$. 

Our NP-hardness proof involves the construction of various $\HRSC$
``gadgets'', some of which are defined in terms of smaller
``sub-gadgets''. The following additional definitions, along with
Lemmas~\ref{lem:outward} and~\ref{lem:inward} below, are useful for
reasoning about the interface between a gadget and one of its
sub-gadgets. For any $\HRSC$ instance $I$ and any subset $H$ of the
hospitals in $I$, we define $\Proj{I}{H}$ as the $\HRSC$ instance
obtained from $I$ by removing all residents who do not have at least
one acceptable hospital in $H$, removing all hospitals not in $H$, and
pruning the removed hospitals from the preference lists of the
non-removed residents. For any $\HRSC$ instance $I$, any assignment
$\Match$ of $I$, and any subset $H$ of the hospitals in $I$, we define
$\Proj{\Match}{H}$ as the assignment $\{(r,h) \in \Match\mid h \in
H\}$ for $\Proj{I}{H}$, and we define $\Proj{B}{H}$ as $\{(R,h) \in
B\mid h \in H\}$ where $B$ denotes $\Block{I}{\Match}$.

\begin{lemma}
  \label{lem:outward}
  Let $I$ be an $\HRSC$ instance, let $\Match$ be an assignment of
  $I$, let $B$ denote $\Block{I}{\Match}$, let $H$ be a subset of
  $\Hosps{I}$, let $I'$ denote $\Proj{I}{H}$, let $B'$ denote
  $\Proj{B}{H}$, let $\Match'$ denote $\Proj{\Match}{H}$, and let
  $B''$ denote $\Block{I'}{\Match'}$. Then $B' \subseteq B''$.
\end{lemma}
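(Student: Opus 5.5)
We argue directly from the definitions: we take an arbitrary blocking coalition $(R,h)$ in $B'$ and check the two conditions defining a blocking coalition for $\Match'$ in $I'$. Since $(R,h) \in B'$, we know $h \in H$ and $(R,h) \in \Block{I}{\Match}$.

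First we check that $(R,h)$ is a well-formed candidate in $I'$. Each $r \in R$ prefers $h$ to its current status, so in particular $h$ is acceptable to $r$. Because $h \in H$, the resident $r$ has an acceptable hospital in $H$ and is therefore not removed in $I'$. Hence $R$ is a set of residents of $I'$, and $h$ is a hospital of $I'$. The preference lists in $I'$ are obtained by pruning, so relative orders among the surviving hospitals are unchanged. Hospital preference lists in $I'$ are likewise the originals restricted to the surviving residents; any resident acceptable to $h \in H$ survives, so $h$'s list is unchanged.

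\emph{Condition (1).} Fix $r \in R$. If $r$ is unassigned under $\Match$, then it is unassigned under $\Match' \subseteq \Match$. Otherwise $r$ is assigned under $\Match$ to some $h'$ with $h$ preferred to $h'$, and we split into two cases.
\begin{itemize}
\item If $h' \in H$, then $(r,h') \in \Match'$. Since $h'$ survives in $r$'s pruned list with the same relative order, $r$ still prefers $h$ to $h'$ in $I'$.
\item If $h' \notin H$, then $r$ is unassigned under $\Match'$.
\end{itemize}
In every case condition (1) holds in $I'$.

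\emph{Condition (2).} Since $h \in H$, we have $\Res{\Match'}{h} = \Res{\Match}{h}$. Also $\Cap{h}$ and all resident sizes are unchanged, and $h$'s preferences among residents are unchanged. So the set $\{r' \in \Res{\Match'}{h} \mid h \text{ prefers every } r \in R \text{ to } r'\}$ is identical in $I$ and $I'$. The same witness subset $R'$ therefore satisfies the required inequalities, and $(R,h) \in B''$.

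The main obstacle is the bookkeeping around removed residents and pruned hospitals. Specifically, we must confirm that residents of $R$ are not deleted and that the case $h' \notin H$ turns into the ``unassigned'' case. Everything else is immediate from the definitions.
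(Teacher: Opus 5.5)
Your proof is correct and follows essentially the same route as the paper. You take $(R,h)\in B'$ and verify condition~(1) by the same case split: $r$ unassigned, $r$ assigned outside $H$, or $r$ assigned inside $H$. You verify condition~(2) by noting that $h$'s acceptable residents are unchanged and that $\Res{\Match}{h}=\Res{\Match'}{h}$. Your extra bookkeeping, that residents of $R$ survive the projection and that pruning preserves relative order, makes explicit what the paper takes for granted.
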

\begin{deferproof}{Lemma~\ref{lem:outward}}
  Let $(R,h)$ be a blocking coalition in $B'$. Hence $h$ is a hospital
  in $H$ and $R$ is a set of residents in $I'$. Below we establish the
  claim of the lemma by proving that $(R,h)$ belongs to $B''$.

  We begin by arguing that condition~(1) in the definition of a
  blocking coalition in $B''$ holds for $(R,h)$. Let $r$ belong to
  $R$. If $r$ is unassigned in $\Match$ or $(r,h')$ belongs to
  $\Match$ where $h'$ belongs to $\Hosps{I} \setminus H$, then $r$ is
  unassigned in $\Match'$ and hence $r$ prefers $h$ to its assignment
  in $\Match'$, as required by condition~(1). If $(r,h')$ belongs to
  $\Match$ where $h'$ belongs to $H$, then $(r,h')$ belongs to
  $\Match'$, and since $(R,h)$ belongs to $B'$, we conclude that $r$
  prefers $h$ to $h'$, as required by condition~(1).
  
  It remains to argue that condition~(2) in the definition of a
  blocking coalition in $B''$ holds for $(R,h)$. This follows because
  the set of acceptable residents of $h$ is the same in $I$ and $I'$,
  and $\Res{\Match}{h} = \Res{\Match'}{h}$.
\end{deferproof}

For any $\HRSC$ instance $I$, any subset $H$ of $\Hosps{I}$, and any
assignment $\Match$ of $I$, we define $\Bdry{I}{H}{\Match}$ as the set
of all blocking coalitions $(R,h)$ in
$\Block{\Proj{I}{H}}{\Proj{\Match}{H}}$ for which there is a resident
$r$ in $R$ and a hospital $h'$ in $\Hosps{I} \setminus H$ such that
$r$ prefers $h'$ to $h$.

\begin{lemma}
  \label{lem:inward}
  Let $I$ be an $\HRSC$ instance, let $\Match$ be an assignment of
  $I$, let $B$ denote $\Block{I}{\Match}$, let $H$ be a subset of
  $\Hosps{I}$, let $I'$ denote $\Proj{I}{H}$, let $B'$ denote
  $\Proj{B}{H}$, let $\Match'$ denote $\Proj{\Match}{H}$, and let
  $B''$ denote $\Block{I'}{\Match'}$. Then $B''\subseteq B' \cup
  \Bdry{I}{H}{\Match}$.
\end{lemma}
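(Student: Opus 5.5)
We take an arbitrary blocking coalition $(R,h)$ in $B''$ and assume it is not in $\Bdry{I}{H}{\Match}$; the goal is to show it lies in $B'$. Since $(R,h)$ is a blocking coalition for $\Match'$ in $I'$, the hospital $h$ is in $H$, and $R$ consists of residents of $I'$, which are also residents of $I$. So it suffices to show that $(R,h)$ belongs to $\Block{I}{\Match}$; membership in $B'$ then follows from $h \in H$. We verify the two conditions in the definition of a blocking coalition, mirroring the proof of Lemma~\ref{lem:outward} in the reverse direction.

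Condition~(2) transfers immediately, as in Lemma~\ref{lem:outward}. The set of residents acceptable to $h$ and $h$'s preference list are the same in $I$ and $I'$, and $\Res{\Match}{h} = \Res{\Match'}{h}$. The capacity of $h$ is also unchanged. Hence the same witness subset $R'$ works in $I$.

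Condition~(1) is the step that needs the hypothesis that $(R,h)$ is not in $\Bdry{I}{H}{\Match}$. Let $r \in R$; we split on $r$'s status under $\Match$.
\begin{enumerate}
\item If $r$ is assigned under $\Match$ to some $h' \in H$, then $(r,h') \in \Match'$. Condition~(1) in $I'$ says $r$ prefers $h$ to $h'$, and preferences among hospitals of $H$ are the same in $I$ and $I'$.
\item If $r$ is unassigned under $\Match$, condition~(1) in $I$ holds trivially.
\item If $r$ is assigned under $\Match$ to some $h' \in \Hosps{I} \setminus H$, then in $I'$ the resident $r$ appears unassigned, so condition~(1) in $I'$ gives no information. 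Here we use that $(R,h) \notin \Bdry{I}{H}{\Match}$: no resident of $R$ prefers any hospital outside $H$ to $h$. In particular $r$ does not prefer $h'$ to $h$. Since $h \ne h'$ and preferences are strict, $r$ prefers $h$ to $h'$, as required.
\end{enumerate}

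This establishes $(R,h) \in \Block{I}{\Match}$, hence $(R,h) \in B'$, completing the inclusion. The only delicate point is the third case above, handled by the definition of $\Bdry{I}{H}{\Match}$. We also need the minor observation that $h'$ is acceptable to $r$ because $(r,h') \in \Match$, so strict comparability of $h$ and $h'$ on $r$'s list applies.
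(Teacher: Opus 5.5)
Your proof is correct and follows essentially the same route as the paper. You fix $(R,h) \in B'' \setminus \Bdry{I}{H}{\Match}$ and transfer condition~(2) using the unchanged acceptable residents, capacity, and $\Res{\Match}{h} = \Res{\Match'}{h}$; you then check condition~(1) by the same three-way case split, invoking the non-boundary hypothesis exactly when $r$ is assigned outside $H$.
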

\begin{deferproof}{Lemma~\ref{lem:inward}}
  Let $(R,h)$ be a blocking coalition in $B'' \setminus
  \Bdry{I}{H}{\Match}$ where $h$ is a hospital in $H$ and $R$ is a set
  of residents in $I'$. Below we establish the claim of the lemma by
  proving that $(R,h)$ belongs to $B'$. Since $h$ belongs to $H$, this
  is equivalent to proving that $(R,h)$ belongs to $B$.

  We begin by arguing that condition~(1) in the definition of a
  blocking coalition in $B$ holds for $(R,h)$. Let $r$ belong to
  $R$. We consider three cases.

  Case~1: $r$ is unassigned in $\Match$. Hence $r$ prefers $h$ to its
  assignment in $\Match$, as required by condition~(1).

  Case~2: $(r,h')$ belongs to $\Match$ where $h'$ belongs to
  $\Hosps{I} \setminus H$. Since $(R,h)$ does not belong to
  $\Bdry{I}{H}{\Match}$, we deduce that $r$ prefers $h$ to $h'$, as
  required by condition~(1).

  Case~3: $(r,h')$ belongs to $\Match$ where $h'$ belongs to
  $H$. Hence $(r,h')$ belongs to $\Match'$. Since $(R,h)$ belongs to
  $B''$, we conclude that $r$ prefers $h$ to $h'$, as required by
  condition~(1).

  It remains to argue that condition~(2) in the definition of a
  blocking coalition in $B$ holds for $(R,h)$. This follows because
  the set of acceptable residents of $h$ is the same in $I$ and $I'$,
  and $\Res{\Match}{h} = \Res{\Match'}{h}$.
\end{deferproof}

\section{Building Blocks}
\label{sec:blocks}

In this section, we present three gadgets. The first of these, which
we call the unstable gadget, is an $\HRSC$ instance with four
residents and three hospitals that does not admit a coalition-stable
assignment, thereby providing a negative answer to the question of
Rodr\'{i}guez and Manlove~\cite{Rodriguez2025} regarding the existence
of coalition-stable assignments. The second gadget that we present,
which we call the literal gadget, contains a copy of the unstable
gadget along with three additional residents and two additional
hospitals. (One resident in the unstable gadget has its preference
list extended to include one of the two additional hospitals.) In
Section~\ref{sec:clause}, we define a clause gadget that includes
three copies of the literal gadget, one for each of the associated
literals. The third gadget that we present, which we call the bistable
gadget, has five residents and three hospitals and admits exactly two
coalition-stable assignments. In Section~\ref{sec:variable}, we define
a variable gadget that includes a copy of the bistable gadget.

Our unstable, literal, and bistable gadgets are analogous to fragments
of the monolithic construction presented by McDermid and
Manlove~\cite{McDermid2010} in the context of pair stability. Each of
these gadgets differs from the corresponding fragment in its detailed
design.\footnote{Using the notation of~\cite[Figure~1,
  erratum]{McDermid2010}, the correspondence is as follows: the
subinstance induced by the set of residents $\{q_{j,1}^k, q_{j,2}^k,
q_{j,3}^k\}$ and the set of hospitals $\{p_{j,1}^k, p_{j,2}^k\}$ is
analogous to our unstable gadget; the subinstance induced by the set
of residents $\{q_{j,1}^k, q_{j,2}^k, q_{j,3}^k, v(p_{j,3}^k)\}$ and
the set of hospitals $\{p_{j,1}^k, p_{j,2}^k, p_{j,3}^k\}$ is
analogous to our literal gadget; the subinstance induced by the set of
residents $\{r_j^1, r_j^2, r_j^3, r_j^4\}$ and the set of hospitals
$\{h_j^1, h_j^2\}$ is analogous to our bistable gadget.} Throughout
the remainder of this section, the only notion of stability we are
concerned with is coalition stability. Accordingly, we find it
convenient to use the term ``stability'' (resp., ``stable'') to mean
``coalition stability'' (resp., ``coalition-stable''). For any $\HRSC$
instance $I$, any assignment $\Match$ of $I$, and any subset $H$ of
$\Hosps{I}$, we say that $\Match$ is $H$-stable if no blocking
coalition in $\Block{I}{\Match}$ involves a hospital in $H$.

\subsection{Unstable Gadget}
\label{sec:unstable}

In this section, we present an $\HRSC$ instance that does not admit a
stable assignment, thereby settling (in the negative) the open
question of Rodr\'{i}guez and Manlove~\cite{Rodriguez2025} concerning
whether a stable assignment always exists.

\begin{figure}
    \centering
    \begin{minipage}[b]{0.65\textwidth}
        \centering
        \begin{picture}(88.00,123.20)
\put(24.00,123.20){\makebox(0,0){$\sigma_1$}}
\put(56.00,123.20){\makebox(0,0){\fbox{$\sigma_2$}}}
\put(88.00,123.20){\makebox(0,0){$\sigma_3$}}
\put(0.00,3.20){\makebox(0,0){$s_4$}}
\put(0.00,35.20){\makebox(0,0){\fbox{$s_3$}}}
\put(0.00,67.20){\makebox(0,0){$s_2$}}
\put(0.00,99.20){\makebox(0,0){$s_1$}}
\put(56.00,35.20){\circle*{6.40}}
\put(56.00,8.00){\vector(0,1){22.40}}
\put(56.00,67.20){\circle*{6.40}}
\put(56.00,40.00){\vector(0,1){22.40}}
\put(56.00,99.20){\circle*{6.40}}
\put(56.00,72.00){\vector(0,1){22.40}}
\put(24.00,99.20){\circle*{6.40}}
\put(51.20,99.20){\vector(-1,0){22.40}}
\put(24.00,67.20){\circle*{6.40}}
\put(24.00,94.40){\vector(0,-1){22.40}}
\put(56.00,67.20){\circle*{6.40}}
\put(28.80,67.20){\vector(1,0){22.40}}
\put(88.00,67.20){\circle*{6.40}}
\put(60.80,67.20){\vector(1,0){22.40}}
\put(88.00,3.20){\circle*{6.40}}
\put(88.00,62.40){\vector(0,-1){54.40}}
\put(56.00,3.20){\circle*{6.40}}
\put(83.20,3.20){\vector(-1,0){22.40}}
\end{picture}
        \subcaption{The unstable gadget.}
        \label{fig:unstable-gadget}
    \end{minipage}%
    \begin{minipage}[b]{0.3\textwidth}
        \centering
        \begin{picture}(36.80,40.00)
\put(24.00,32.00){\makebox(0,0){\fbox{$\sigma_2$}}}
\put(0.00,8.00){\makebox(0,0){\fbox{$s_3$}}}
\put(24.00,8.00){\circle*{6.40}}
\put(24.00,8.00){\line(2,1){16.00}}
\put(24.00,8.00){\line(2,-1){16.00}}
\put(40.00,0.00){\line(0,1){16.00}}
\put(28.00,8.00){\line(2,1){10.56}}
\put(28.00,8.00){\line(2,-1){10.56}}
\put(38.56,2.72){\line(0,1){10.56}}
\end{picture}
        \subcaption{The associated icon.}
    \end{minipage}
    \caption{The preference digraph for the unstable gadget has four
      residents $s_1, \ldots, s_4$ and three hospitals $\sigma_1,
      \sigma_2, \sigma_3$ with sizes and capacities as indicated. The
      unstable gadget icon is used in the preference digraph of
      Figure~\ref{fig:literal-gadget}; the dot in this icon
      corresponds to the resident-hospital pair $(s_3, \sigma_2)$.}
\end{figure}

As discussed in Section~\ref{sec:intro}, we find it useful to present
all of our gadgets using a preference
digraph. Figure~\ref{fig:unstable-gadget} presents the preference
digraph for our unstable gadget: each dot represents a mutually
acceptable resident-hospital pair; dots in the same row (resp.,
column) have the same associated resident (resp., hospital); the
symbol associated with each row (resp., column) represents the
associated resident (resp., hospital); if the symbol is enclosed in a
box then the size (resp., capacity) of the resident (resp., hospital)
is 2, otherwise it is 1; the directed edges within a row (resp.,
column) are used to represent the preference list of the associated
resident (resp., hospital). To understand how the directed edges
determine preference order, it is sufficient to consider the example
of hospital $\sigma_2$ in Figure~\ref{fig:unstable-gadget}. Hospital
$\sigma_2$ has four acceptable residents $s_1$, $s_2$, $s_3$, and
$s_4$. These four residents are linked together (within column
$\sigma_2$) by three directed edges forming a simple path from $s_4$
to $s_3$ to $s_2$ to $s_1$. This path signifies that $s_1$ (resp.,
$s_2$, $s_3$, $s_4$) is the first (resp., second, third, fourth)
preference of $\sigma_2$. Similarly, resident $s_2$ has three
acceptable hospitals $\sigma_1$, $\sigma_2$, and $\sigma_3$, where
$\sigma_3$ (resp., $\sigma_2$, $\sigma_1$) is the first (resp.,
second, third) preference of $s_2$.

Throughout the remainder of Section~\ref{sec:unstable}, let $I$ denote
the unstable gadget of Figure~\ref{fig:unstable-gadget}. We define the
\emph{critical assignment of $I$} as $\{(s_1, \sigma_1), (s_2,
\sigma_3), (s_4, \sigma_2) \}$.

\begin{lemma}
\label{lem:unstable}
  Instance $I$ does not admit a stable assignment, and the critical
  assignment of $I$ is the unique assignment $\Match$ of $I$ such
  that $\Block{I}{\Match} = \{(s_3, \sigma_2)\}$.
\end{lemma}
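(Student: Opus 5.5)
The plan is a direct case analysis on the seven mutually acceptable pairs of Figure~\ref{fig:unstable-gadget}. I first record the preferences in the figure. $\sigma_1$ prefers $s_2$ to $s_1$; $\sigma_2$ prefers $s_1$, then $s_2$, then $s_3$, then $s_4$; $\sigma_3$ prefers $s_4$ to $s_2$. On the resident side, $s_1$ prefers $\sigma_1$ to $\sigma_2$; $s_2$ prefers $\sigma_3$, then $\sigma_2$, then $\sigma_1$; $s_3$ accepts only $\sigma_2$; and $s_4$ prefers $\sigma_2$ to $\sigma_3$. Sizes and capacities are $\Size{s_3}=\Cap{\sigma_2}=2$, with all others equal to $1$. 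Throughout, blocking coalitions $(R,h)$ are understood with $R$ nonempty.

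\textbf{Step 1: the critical assignment.} For $\Match=\{(s_1,\sigma_1),(s_2,\sigma_3),(s_4,\sigma_2)\}$, the only resident who prefers some hospital to its current one is $s_3$. Indeed, $s_1$, $s_2$, $s_4$ each hold their first choice. So any blocking coalition has $R=\{s_3\}$ and $h=\sigma_2$. For $(\{s_3\},\sigma_2)$, take $R'=\{s_4\}$. Then $\Size{R}-\Cap{\sigma_2}+\Size{\Res{\Match}{\sigma_2}} = 2-2+1 = 1 \le \Size{R'} = 1 \le 2$, and $\sigma_2$ prefers $s_3$ to $s_4$. Hence $\Block{I}{\Match}=\{(s_3,\sigma_2)\}$.

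\textbf{Step 2: every other assignment has a blocking coalition other than $(s_3,\sigma_2)$.} Together with Step~1, this gives both nonexistence of a stable assignment and uniqueness. I would split on the contents of $\sigma_2$.

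\emph{Case A: $s_3\in\Res{\Match}{\sigma_2}$.} Then $\Res{\Match}{\sigma_2}=\{s_3\}$ and $\sigma_2$ is full. Now $s_4$ is at $\sigma_3$ or unassigned.
\begin{itemize}
\item If $s_4$ is not at $\sigma_3$, then $(\{s_4\},\sigma_3)$ blocks, since $\sigma_3$ prefers $s_4$ to its only other candidate $s_2$, or $\sigma_3$ is empty.
\item If $s_4$ is at $\sigma_3$, then $s_2$ is at $\sigma_1$ or unassigned, and so prefers $\sigma_2$. Likewise $s_1$ is at $\sigma_1$, or unassigned and preferring $\sigma_2$. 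If $s_1$ is at $\sigma_1$, then $s_2$ is unassigned and $(\{s_2\},\sigma_1)$ blocks. Otherwise $(\{s_1,s_2\},\sigma_2)$ blocks with $R'=\{s_3\}$, because $2-2+2\le 2\le 2$.
\end{itemize}

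\emph{Case B: $s_3\notin\Res{\Match}{\sigma_2}$.} Here $\Res{\Match}{\sigma_2}\subseteq\{s_1,s_2,s_4\}$ has at most two members. I would chase the cyclic structure $s_1\to\sigma_1\to s_2\to\sigma_3\to s_4\to\sigma_2\to s_1$. In each subcase I would exhibit a unit-size blocking pair $(\{r\},h)$ with $r\neq s_3$, or some blocking coalition with $h\neq\sigma_2$, unless $\Match$ is the critical assignment. Typical deductions are:
\begin{itemize}
\item If $\sigma_2$ has spare capacity, any resident among $s_1,s_2,s_4$ who prefers $\sigma_2$ blocks it.
\item If $s_2$ is not at $\sigma_3$ and $\sigma_3$ lacks $s_4$, then $(\{s_2\},\sigma_3)$ blocks.
\item If $s_2$ is not at $\sigma_3$ and not at $\sigma_2$, and $\sigma_1$ holds $s_1$ or is empty, then $(\{s_2\},\sigma_1)$ blocks.
\end{itemize}
Following these forced moves should pin down $s_2$ at $\sigma_3$. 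That in turn forces $s_4$ to $\sigma_2$, and then $s_1$ to $\sigma_1$, which is the critical assignment.

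\textbf{Main obstacle.} The main obstacle is Case B: keeping the enumeration exhaustive, including residents left unassigned and $\sigma_2$ holding two unit residents, without missing a subcase. In Case A, the key point is recognizing that when $\sigma_2$ is full with $s_3$, only a size-$2$ coalition can displace $s_3$, and that such a coalition really does form.
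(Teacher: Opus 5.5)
Your Step~1 and Case~A are correct and coincide with the paper's argument; Case~A is exactly the paper's proof that $(s_3,\sigma_2)\notin\Match$. The gap is Case~B. It carries most of the uniqueness claim, and you leave it as a plan. Moreover, the three deductions you list do not by themselves force $s_2$ to $\sigma_3$. Consider $\Match_0=\{(s_1,\sigma_2),(s_2,\sigma_2),(s_4,\sigma_3)\}$, which lies in Case~B. Here $\sigma_2$ is full, $\sigma_3$ holds $s_4$, and $s_2$ sits at $\sigma_2$, so none of your deductions applies. The only blocking coalition of $\Match_0$ is $(\{s_1\},\sigma_1)$, which exists because $\sigma_1$ is empty. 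Like the critical assignment, $\Match_0$ has exactly one blocking coalition. This is precisely the kind of subcase that cannot be settled with ``should''.

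The missing ingredient is the paper's first step, $(s_1,\sigma_2)\notin\Match$:
\begin{itemize}
\item If $s_1$ is at $\sigma_2$, then $s_2$ must be at $\sigma_1$; otherwise $\sigma_1$ is empty and $(\{s_1\},\sigma_1)$ blocks.
\item But then $\Res{\Match}{\sigma_2}\subseteq\{s_1,s_4\}$, and $(\{s_2\},\sigma_2)$ blocks. Take $R'=\emptyset$ if $\Res{\Match}{\sigma_2}=\{s_1\}$, and $R'=\{s_4\}$ otherwise.
\end{itemize}
With $s_1$ and $s_3$ both excluded from $\sigma_2$, we have $\Res{\Match}{\sigma_2}\subseteq\{s_2,s_4\}$, and the chain really is forced:
\begin{itemize}
\item $s_4$ must be at $\sigma_2$; otherwise $\sigma_2$ has a free unit and $(\{s_4\},\sigma_2)$ blocks.
\item Hence $\Res{\Match}{\sigma_3}\subseteq\{s_2\}$, so $s_2$ must be at $\sigma_3$; otherwise $(\{s_2\},\sigma_3)$ blocks.
\item Hence $\Res{\Match}{\sigma_1}\subseteq\{s_1\}$, so $s_1$ must be at $\sigma_1$.
\end{itemize}
Alternatively, you can keep your order and add one deduction: if $\Res{\Match}{\sigma_2}=\{s_1,s_2\}$, then $(\{s_1\},\sigma_1)$ blocks. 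Your second and first deductions reduce the case ``$s_2$ not at $\sigma_3$'' to exactly that situation. With this addition your argument is complete and is essentially the paper's proof.
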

\begin{proof}
  Let $\Match^*$ denote the critical assignment of $I$ and let $B^*$
  denote $\Block{I}{\Match^*}$. We first argue that $B^* = \{(s_3,
  \sigma_2)\}$. It is easy to see that $(s_3, \sigma_2)$ belongs to
  $B^*$ and is the unique blocking coalition in $B^*$ that involves
  resident $s_3$. Each resident in the set $\{s_1, s_2, s_4\}$ gets
  its most preferred hospital and so cannot participate in any
  blocking coalition. Thus $B^* = \{(s_3, \sigma_2)\}$, as required.

  Let $\Match$ be an assignment for $I$ such that $\Block{I}{\Match}
  \subseteq \{(s_3, \sigma_2)\}$. Below we complete the proof by
  proving that $\Match = \Match^*$. Let $B$ denote
  $\Block{I}{\Match}$.

  We begin by arguing that $(s_1, \sigma_2)$ does not belong to
  $\Match$. Assume for the sake of contradiction that $(s_1,
  \sigma_2)$ belongs to $\Match$. Hence $(s_3, \sigma_2)$ does not
  belong to $\Match$. In addition, since $(s_1, \sigma_1)$ does not
  belong to $B$, we deduce that $(s_2, \sigma_1)$ belongs to
  $\Match$. Hence $(s_2, \sigma_2)$ does not belong to $\Match$
  and we conclude that $\Res{\Match}{\sigma_2} \subseteq \{s_1,
  s_4\}$. It follows that $(s_2, \sigma_2)$ belongs to $B$, a
  contradiction.

  We now argue that $(s_3, \sigma_2)$ does not belong to
  $\Match$. Assume for the sake of contradiction that $(s_3,
  \sigma_2)$ belongs to $\Match$. Hence $\Res{\Match}{\sigma_2} =
  \{s_3\}$. Since $(s_4, \sigma_3)$ does not belong to $B$, we deduce
  that $(s_4, \sigma_3)$ belongs to $\Match$. Since $(s_2, \sigma_1)$
  does not belong to $B$, we deduce that $(s_2, \sigma_1)$ belongs to
  $\Match$. It follows that $s_1$ is unassigned in $\Match$ and hence
  $(\{s_1, s_2\}, \sigma_2)$ belongs to $B$, a contradiction.

  Since $(s_1, \sigma_2)$ and $(s_3, \sigma_2)$ do not belong to
  $\Match$, we conclude that $\Res{\Match}{\sigma_2} \subseteq \{s_2,
  s_4\}$. Since $(s_4, \sigma_2)$ does not belong to $B$, we deduce
  that $(s_4, \sigma_2)$ belongs to $\Match$. Hence
  $\Res{\Match}{\sigma_3} \subseteq \{s_2\}$. Since $(s_2, \sigma_3)$
  does not belong to $B$, we deduce that $(s_2, \sigma_3)$ belongs to
  $\Match$. Hence $\Res{\Match}{\sigma_1} \subseteq \{s_1\}$. Since
  $(s_1, \sigma_1)$ does not belong to $B$, we deduce that $(s_1,
  \sigma_1)$ belongs to $\Match$. In summary, $\Match = \Match^*$, as
  required.
\end{proof}

\subsection{Literal Gadget}
\label{sec:literal}

\begin{figure}
    \centering
    \begin{minipage}[b]{0.65\textwidth}
        \centering
        \begin{picture}(100.80,128.00)
\put(24.00,128.00){\makebox(0,0){$\tau_1$}}
\put(56.00,128.00){\makebox(0,0){\fbox{$\tau_2$}}}
\put(88.00,128.00){\makebox(0,0){\fbox{$\sigma_2$}}}
\put(0.00,8.00){\makebox(0,0){\fbox{$s_3$}}}
\put(0.00,40.00){\makebox(0,0){$t_3$}}
\put(0.00,72.00){\makebox(0,0){$t_2$}}
\put(0.00,104.00){\makebox(0,0){$t_1$}}
\put(88.00,8.00){\circle*{6.40}}
\put(88.00,8.00){\line(2,1){16.00}}
\put(88.00,8.00){\line(2,-1){16.00}}
\put(104.00,0.00){\line(0,1){16.00}}
\put(92.00,8.00){\line(2,1){10.56}}
\put(92.00,8.00){\line(2,-1){10.56}}
\put(102.56,2.72){\line(0,1){10.56}}
\put(83.20,8.00){\vector(-1,0){22.40}}
\put(56.00,8.00){\circle*{6.40}}
\put(56.00,12.80){\vector(0,1){22.40}}
\put(56.00,40.00){\circle*{6.40}}
\put(56.00,44.80){\vector(0,1){22.40}}
\put(56.00,72.00){\circle*{6.40}}
\put(51.20,72.00){\vector(-1,0){22.40}}
\put(24.00,72.00){\circle*{6.40}}
\put(24.00,76.80){\vector(0,1){22.40}}
\put(24.00,104.00){\circle*{6.40}}
\end{picture}
        \subcaption{The literal gadget.}
        \label{fig:literal-gadget}
    \end{minipage}%
    \begin{minipage}[b]{0.3\textwidth}
        \centering
        \begin{picture}(36.80,32.00)
\put(24.00,32.00){\makebox(0,0){$\tau_1$}}
\put(0.00,8.00){\makebox(0,0){$t_1$}}
\put(24.00,8.00){\circle*{6.40}}
\put(24.00,8.00){\line(2,1){16.00}}
\put(24.00,8.00){\line(2,-1){16.00}}
\put(40.00,0.00){\line(0,1){16.00}}
\end{picture}
        \subcaption{The associated icon.}
    \end{minipage}
    \caption{The preference digraph for the literal gadget includes a
      copy of the unstable gadget and has a total of $3 + 4 = 7$
      residents and $2 + 3 = 5$ hospitals. Within the literal gadget,
      resident $s_3$ has two acceptable hospitals $\tau_2$ and
      $\sigma_2$, with $\tau_2$ being the first preference of $s_3$ as
      indicated by the directed edge in row $s_3$.
      \iffullversion
      The literal gadget icon is used in the preference digraphs of
      Figures~\ref{fig:clause-gadget}
      and~\ref{fig:global-construction}; the dot in this icon refers
      to the resident-hospital pair $(t_1, \tau_1)$.
      \else
      The literal gadget icon is used in the preference digraph of
      Figure~\ref{fig:clause-gadget}; the dot in this icon refers
      to the resident-hospital pair $(t_1, \tau_1)$.
      \fi}
\end{figure}

Throughout the remainder of Section~\ref{sec:literal}, let $I$ denote
the literal gadget of Figure~\ref{fig:literal-gadget} and let $I'$
denote the unstable gadget of $I$. We define the \emph{critical
assignment of $I$} as the union of the critical assignment of $I'$ and
$\{(s_3, \tau_2), (t_2, \tau_1)\}$.

\begin{lemma}
\label{lem:literal}
  Instance $I$ does not admit a stable assignment, and the critical
  assignment of $I$ is the unique assignment $\Match$ of $I$ such that
  $\Block{I}{\Match} = \{(t_1, \tau_1)\}$.
\end{lemma}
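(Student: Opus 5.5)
We argue directly from the preference digraph, as in the proof of Lemma~\ref{lem:unstable}. The new hospitals are $\tau_1$ (capacity~1, preference $t_1$ then $t_2$) and $\tau_2$ (capacity~2, preference $t_2$, then $t_3$, then $s_3$). The new residents $t_1,t_2,t_3$ have size~1. Resident $t_2$ prefers $\tau_1$ to $\tau_2$, $t_1$ (resp., $t_3$) finds only $\tau_1$ (resp., $\tau_2$) acceptable, and $s_3$ (size~2) prefers $\tau_2$ to $\sigma_2$.

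\emph{The critical assignment.} Let $\Match^*$ be the critical assignment and check that $\Block{I}{\Match^*}=\{(t_1,\tau_1)\}$.
\begin{itemize}
\item $(t_1,\tau_1)$ blocks, since $\tau_1$ prefers $t_1$ to $t_2$.
\item $s_3$, $t_2$ and the residents $s_1,s_2,s_4$ each receive their first choice, so none of them can join a coalition.
\item The only remaining resident is $t_3$, and its only acceptable hospital is $\tau_2$. A coalition $(\{t_3\},\tau_2)$ would need $R'\subseteq\{s_3\}$ with $1\le\Size{R'}\le 1$, which is impossible.
\end{itemize}

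\emph{Uniqueness and nonexistence.} Let $\Match$ satisfy $B=\Block{I}{\Match}\subseteq\{(t_1,\tau_1)\}$; we show $\Match=\Match^*$. Let $H=\{\sigma_1,\sigma_2,\sigma_3\}$ and let $\Match'=\Proj{\Match}{H}$. Since no coalition in $B$ involves a hospital in $H$, we have $\Proj{B}{H}=\emptyset$.

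\emph{Step 1: $(s_3,\tau_2)\in\Match$.} Suppose not. Then $s_3$ is unassigned or assigned within $H$, so condition~(1) for $s_3$ has the same truth value in $I$ and in $I'$. Every other resident of $I'$ finds only hospitals of $H$ acceptable. Hence the argument of Lemma~\ref{lem:inward} goes through without the boundary term, and gives $\Block{I'}{\Match'}\subseteq\Proj{B}{H}=\emptyset$. This contradicts Lemma~\ref{lem:unstable}.

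\emph{Step 2: the restriction to $H$ is critical.} By Step~1, $s_3$ is unassigned in $\Match'$. Every coalition of $\Block{I'}{\Match'}$ that avoids $s_3$ is not in $\Bdry{I}{H}{\Match}$, so by Lemma~\ref{lem:inward} it lies in $\Proj{B}{H}=\emptyset$. The proof of Lemma~\ref{lem:unstable} derives $\Match'=$ the critical assignment of $I'$ using only the following facts:
\begin{itemize}
\item $(s_3,\sigma_2)\notin\Match'$, which holds here by Step~1;
\item the non-blocking of coalitions that avoid $s_3$, namely $(s_1,\sigma_1)$, $(s_2,\sigma_2)$, $(s_4,\sigma_3)$, $(s_2,\sigma_1)$, $(s_4,\sigma_2)$, $(s_2,\sigma_3)$ and $(\{s_1,s_2\},\sigma_2)$.
\end{itemize}
Rerunning that argument therefore yields $\Match'=$ the critical assignment of $I'$.

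\emph{Step 3: $(t_2,\tau_1)\in\Match$.} Since $\Res{\Match}{\tau_2}=\{s_3\}$ fills capacity~2, $t_2$ is not at $\tau_2$ and $t_3$ is unassigned. Suppose $t_2$ is not at $\tau_1$, so $t_2$ is unassigned. Then $(\{t_2,t_3\},\tau_2)$ blocks with $R'=\{s_3\}$, since
\[
2-2+2\le 2\le 2 .
\]
This coalition is not $(t_1,\tau_1)$, a contradiction. Hence $(t_2,\tau_1)\in\Match$, and $t_1$ is unassigned. Therefore $\Match=\Match^*$.

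Nonexistence of a stable assignment follows: a stable assignment would satisfy $B=\emptyset\subseteq\{(t_1,\tau_1)\}$, hence equal $\Match^*$, whose blocking set is nonempty.

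\emph{Main obstacle.} The delicate part is Steps~1 and~2. We cannot apply Lemma~\ref{lem:unstable} verbatim, because its hypothesis fails in the subinstance. For example, in Step~2, coalitions of $I'$ such as $(\{s_3,\dots\},\sigma_2)$ are boundary coalitions that need not be controlled. The fix is to observe that the proof of Lemma~\ref{lem:unstable} only ever invokes the absence of coalitions not containing $s_3$. I would verify that list of coalitions carefully.
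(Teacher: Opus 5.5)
Your proof is correct, but it takes a somewhat different route from the paper's.

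\textbf{First half.} You check $\Block{I}{\Match^*}=\{(t_1,\tau_1)\}$ by direct inspection: five of the seven residents get their first choice, and $(t_3,\tau_2)$ fails on sizes. The paper instead uses Lemma~\ref{lem:outward} to show that $\Match^*$ is $H$-stable. Both are fine. You should, however, say explicitly that $t_1$ lies in no coalition other than $(t_1,\tau_1)$: its only hospital is $\tau_1$, and $t_1$ and $t_3$ share no hospital.

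\textbf{Uniqueness.} You first force $(s_3,\tau_2)\in\Match$ from the nonexistence half of Lemma~\ref{lem:unstable}, correctly observing that Case~2 in the proof of Lemma~\ref{lem:inward} cannot arise. You then re-run the internal argument of Lemma~\ref{lem:unstable}, checking that it only uses coalitions avoiding $s_3$.

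The paper instead uses Lemma~\ref{lem:unstable} as a black box. Any coalition in $\Bdry{I}{H}{\Match_1}$ must contain $s_3$, hence be at $\sigma_2$. Since $\Size{s_3}=2=\Cap{\sigma_2}$, such a coalition must be exactly $(\{s_3\},\sigma_2)$; a larger $R$ would require $\Size{R'}>\Size{\Res{\Match_1}{\sigma_2}}$. Hence $\Block{I'}{\Proj{\Match_1}{H}}\subseteq\{(s_3,\sigma_2)\}$, and Lemma~\ref{lem:unstable} gives that the restriction is critical. Then $(s_3,\tau_2)\in\Match_1$ follows because $(s_3,\sigma_2)$ must not block in $I$.

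In particular, the ``main obstacle'' you describe is not real. Coalitions $(\{s_3,\dots\},\sigma_2)$ with more than one resident never block, so the hypothesis of Lemma~\ref{lem:unstable} does hold in the subinstance.

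\textbf{Comparison.} The paper's route is shorter and modular: it depends only on the statement of Lemma~\ref{lem:unstable}, which is what lets the gadgets be composed. Yours is also valid, and your Step~1 is a neat use of the nonexistence half. The cost is that it ties the literal-gadget proof to the specific coalitions used inside the proof of Lemma~\ref{lem:unstable}.
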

\begin{proof}
  Let $\Match^*$ denote the critical assignment of $I$, let $B^*$
  denote $\Block{I}{\Match^*}$, let $H$ denote the set of hospitals in
  $I'$, let $\Match'$ denote $\Proj{\Match^*}{H}$, and let $B'$ denote
  $\Block{I'}{\Match'}$. Thus $\Match'$ is the critical assignment of
  $I'$, and hence Lemma~\ref{lem:unstable} implies $B' = \{(s_3,
  \sigma_2)\}$.

  We claim that $B^* = \{(t_1, \tau_1)\}$. It is easy to see that
  $(t_1, \tau_1)$ belongs to $B^*$. Since resident $t_1$ has only one
  acceptable hospital $\tau_1$, and since $t_1$ has size 1 and
  $\tau_1$ has capacity 1, resident $t_1$ cannot belong to any other
  blocking coalition in $B^*$ apart from $(t_1, \tau_1)$. Since
  residents $s_3$ and $t_2$ are each assigned to their most preferred
  hospital under $\Match^*$, neither $s_3$ nor $t_2$ participates in
  any blocking coalition in $B^*$. Since resident $t_3$ has size 1,
  resident $s_3$ has size 2, and $(s_3, \tau_2)$ belongs to
  $\Match^*$, we deduce that resident $t_3$ does not participate in
  any blocking coalition in $B^*$. Given the foregoing observations,
  it is sufficient to argue that $\Match^*$ is
  $H$-stable. Lemma~\ref{lem:outward} implies
  $\Proj{B^*}{H} \subseteq B'$. Since $B' = \{(s_3, \sigma_2)\}$ and
  since resident $s_3$ does not participate in any blocking coalition
  in $B^*$, we deduce that $\Match^*$ is $H$-stable.

  Let $\Match_1$ be an assignment of $I$ such that
  $\Block{I}{\Match_1} \subseteq \{(t_1, \tau_1)\}$ and let $B$ denote
  $\Block{I}{\Match_1}$. Below we complete the proof by establishing
  that $\Match_1 = \Match^*$.

  \iffullversion
      We claim that $\Bdry{I}{H}{\Match_1} \subseteq \{(s_3,
      \sigma_2)\}$. Let $(R,h)$ belong to $\Bdry{I}{H}{\Match_1}$. Since
      $s_3$ is the only resident in $I'$ with an acceptable hospital in
      $\Hosps{I} \setminus H$, we deduce that $s_3$ belongs to $R$. Since
      $\sigma_2$ is the only acceptable hospital of $s_3$ in $I'$, we
      deduce that $h = \sigma_2$. Since $\Size{s_3} = 2 = \Cap{\sigma_2}$,
      the claim follows.
  \fi

  Let $B''$ denote $\Block{I'}{\Proj{\Match_1}{H}}$. Since
  $\Bdry{I}{H}{\Match_1} \subseteq \{(s_3, \sigma_2)\}$,
  Lemma~\ref{lem:inward} implies
  $B'' \setminus \{(s_3, \sigma_2)\} \subseteq \Proj{B}{H} \subseteq
  B$. Since $B \subseteq \{(t_1, \tau_1)\}$, we conclude from
  Lemma~\ref{lem:unstable} that $\Proj{\Match_1}{H} = \Match'$. Hence
  $B'' = \{(s_3, \sigma_2)\}$. Since $(s_3, \sigma_2)$ does not belong
  to $B$, we deduce that $(s_3, \tau_2)$ belongs to $\Match_1$. It
  follows that $\Res{\Match_1}{\tau_2} = \{s_3\}$. Since
  $(\{t_2, t_3\}, \tau_2)$ does not belong to $B$, we deduce that
  $(t_2, \tau_1)$ belongs to $\Match_1$. Thus
  $\Match_1 = \Match' \cup \{(s_3, \tau_2), (t_2, \tau_1)\} =
  \Match^*$.
\end{proof}

\subsection{Bistable Gadget}
\label{sec:bistable}

\begin{figure}
    \centering
    \begin{minipage}[b]{0.65\textwidth}
        \centering
        \begin{picture}(88.00,155.20)
\put(24.00,155.20){\makebox(0,0){\fbox{$\alpha_1$}}}
\put(56.00,155.20){\makebox(0,0){$\alpha_2$}}
\put(88.00,155.20){\makebox(0,0){\fbox{$\alpha_3$}}}
\put(0.00,3.20){\makebox(0,0){$a_5$}}
\put(0.00,35.20){\makebox(0,0){\fbox{$a_4$}}}
\put(0.00,67.20){\makebox(0,0){$a_3$}}
\put(0.00,99.20){\makebox(0,0){\fbox{$a_2$}}}
\put(0.00,131.20){\makebox(0,0){$a_1$}}
\put(24.00,3.20){\circle*{6.40}}
\put(28.80,3.20){\vector(1,0){54.40}}
\put(88.00,3.20){\circle*{6.40}}
\put(88.00,8.00){\vector(0,1){22.40}}
\put(88.00,35.20){\circle*{6.40}}
\put(88.00,40.00){\vector(0,1){86.40}}
\put(88.00,131.20){\circle*{6.40}}
\put(83.20,131.20){\vector(-1,0){54.40}}
\put(24.00,131.20){\circle*{6.40}}
\put(24.00,126.40){\vector(0,-1){22.40}}
\put(24.00,99.20){\circle*{6.40}}
\put(24.00,94.40){\vector(0,-1){22.40}}
\put(24.00,67.20){\circle*{6.40}}
\put(51.20,67.20){\vector(-1,0){22.40}}
\put(56.00,67.20){\circle*{6.40}}
\put(24.00,62.40){\vector(0,-1){54.40}}
\end{picture}
        \subcaption{The bistable gadget.}
        \label{fig:bistable-gadget}
    \end{minipage}%
    \begin{minipage}[b]{0.3\textwidth}
        \centering
        \begin{picture}(56.00,68.00)
\put(24.00,72.00){\makebox(0,0){\fbox{$\alpha_1$}}}
\put(56.00,72.00){\makebox(0,0){\fbox{$\alpha_3$}}}
\put(0.00,8.00){\makebox(0,0){\fbox{$a_4$}}}
\put(0.00,40.00){\makebox(0,0){\fbox{$a_2$}}}
\put(24.00,0.00){\line(1,0){32.00}}
\put(24.00,48.00){\line(1,0){32.00}}
\put(24.00,0.00){\line(0,1){48.00}}
\put(56.00,0.00){\line(0,1){48.00}}
\put(24.00,40.00){\circle*{6.40}}
\put(56.00,8.00){\circle*{6.40}}
\end{picture}
        \subcaption{The associated icon.}
        \label{fig:bistable-icon}
    \end{minipage}
    \caption{The preference digraph for the bistable gadget has five
      residents $a_1, \ldots, a_5$ and three hospitals $\alpha_1,
      \alpha_2, \alpha_3$. The bistable gadget icon is used in the
      preference digraph of Figure~\ref{fig:variable-gadget}; the left
      (resp., right) dot in this icon corresponds to the
      resident-hospital pair $(a_2, \alpha_1)$ (resp., $(a_4,
      \alpha_3)$).}
\end{figure}

Throughout the remainder of Section~\ref{sec:bistable}, let $I$ denote
the bistable gadget depicted in Figure~\ref{fig:bistable-gadget}. We
define the \emph{true (resp., false) assignment of $I$} as $\{(a_1,
\alpha_3), (a_2, \alpha_1), (a_3, \alpha_2), (a_5, \alpha_3)\}$
(resp., $\{(a_3, \alpha_1), (a_4, \alpha_3), (a_5, \alpha_1)\}$).

\begin{lemma}
\label{lem:bistable}
    The true and false assignments of $I$ are stable, and $I$ does not
    admit any other stable assignments.
\end{lemma}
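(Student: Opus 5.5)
The plan is to read the preference lists off Figure~\ref{fig:bistable-gadget}, check that the true and false assignments are stable, and then prove uniqueness by a case split on whether $(a_2,\alpha_1)$ belongs to $\Match$. Using the same conventions as for Figure~\ref{fig:unstable-gadget} (an edge points toward the more preferred end), the lists are:
\begin{itemize}
\item $a_1$: $\alpha_1 \succ \alpha_3$;
\item $a_2$: only $\alpha_1$;
\item $a_3$: $\alpha_1 \succ \alpha_2$;
\item $a_4$: only $\alpha_3$;
\item $a_5$: $\alpha_3 \succ \alpha_1$;
\item $\alpha_1$: $a_5 \succ a_3 \succ a_2 \succ a_1$;
\item $\alpha_2$: only $a_3$;
\item $\alpha_3$: $a_1 \succ a_4 \succ a_5$.
\end{itemize}
Sizes and capacities are $\Size{a_2}=\Size{a_4}=2$ and $\Cap{\alpha_1}=\Cap{\alpha_3}=2$, with everything else equal to $1$.

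For stability, I would argue resident by resident, as in Lemma~\ref{lem:unstable}. A resident assigned to its first choice cannot join any coalition, which leaves only a few candidate coalitions.

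In the true assignment, only $a_3$ and $a_4$ are not at their first choice. The candidate coalitions are therefore $(\{a_3\},\alpha_1)$ and $(\{a_4\},\alpha_3)$.
\begin{itemize}
\item $(\{a_3\},\alpha_1)$: the only resident at $\alpha_1$ is $a_2$, of size $2 > 1 = \Size{R}$, so no valid $R'$ exists.
\item $(\{a_4\},\alpha_3)$: $\alpha_3$ prefers $a_1$ to $a_4$, so $R' \subseteq \{a_5\}$. The lower bound is $2-2+2=2$, so this fails.
\end{itemize}

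In the false assignment, $a_1$, $a_2$ and $a_5$ are the residents not at their first choice.
\begin{itemize}
\item $a_2$: $\alpha_1$ prefers both $a_3$ and $a_5$ to $a_2$.
\item $a_1$: $\alpha_1$ likes $a_1$ least, so $a_1$ cannot join any coalition at $\alpha_1$. At $\alpha_3$, $a_1$ alone has lower bound $1-2+2=1$ but $R' \subseteq \{a_4\}$ has size $2$. The pair $\{a_1,a_5\}$ at $\alpha_3$ is excluded because $\alpha_3$ prefers $a_4$ to $a_5$.
\item $a_5$ alone at $\alpha_3$: $\alpha_3$ prefers $a_4$ to $a_5$, so $R' = \emptyset$, while the lower bound is $1$.
\end{itemize}
This leaves no blocking coalition.

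For uniqueness, let $\Match$ be stable. Two facts are used throughout. First, $(a_3,\alpha_2)$ forces $a_3$ to be assigned somewhere. Second, a size-$1$ coalition at a capacity-$2$ hospital holding a single size-$1$ resident needs only $R'=\emptyset$.

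Case $(a_2,\alpha_1)\in\Match$.
\begin{itemize}
\item Hospital $\alpha_1$ is full, so $a_3$ is at $\alpha_2$.
\item If $a_5$ were not at $\alpha_3$, then $(\{a_3,a_5\},\alpha_1)$ would block with $R'=\{a_2\}$. Hence $a_5$ is at $\alpha_3$, which excludes $a_4$ because of capacity.
\item If $a_1$ were not at $\alpha_3$, then $(\{a_1\},\alpha_3)$ would block with $R'=\emptyset$.
\end{itemize}
This yields the true assignment.

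Case $(a_2,\alpha_1)\notin\Match$. Then $\Res{\Match}{\alpha_1}\subseteq\{a_1,a_3,a_5\}$, and I must derive the false assignment. Note that $(\{a_2\},\alpha_1)$ blocks whenever the lower bound $\Size{\Res{\Match}{\alpha_1}}$ can be met by a subset of residents that $\alpha_1$ ranks below $a_2$, of total size at most $2$; this is what excludes $a_1$.
\begin{itemize}
\item First, $a_1$ is not at $\alpha_1$: it is ranked below $a_2$, and $a_2$ alone would block with $R'=\Res{\Match}{\alpha_1}$.
\item Next, I would show $a_4$ is at $\alpha_3$. Otherwise $\alpha_3$ holds some subset of $\{a_1,a_5\}$, and one of three things happens: $(\{a_4\},\alpha_3)$ blocks with $R'=\{a_5\}$ or $\emptyset$ (if $\Res{\Match}{\alpha_3}\subseteq\{a_5\}$); $(\{a_5\},\alpha_3)$ blocks (if $a_1$ is alone there); or $a_1$, now not at its top choice, together with the capacity of $\alpha_1$, gives a blocking coalition at $\alpha_1$.
\item With $a_4$ filling $\alpha_3$, $a_1$ and $a_5$ are not at $\alpha_3$.
\item Then $a_5$ must be at $\alpha_1$, else $(\{a_5\},\alpha_1)$ blocks.
\item Then $a_3$ must be at $\alpha_1$, else $(\{a_3\},\alpha_1)$ blocks with $R'=\emptyset$ (nothing at $\alpha_1$ ranks below $a_3$ except possibly $a_1$, which is excluded).
\end{itemize}
This yields the false assignment.

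I expect the main obstacle to be the subcase analysis in the second case establishing that $a_4$ is at $\alpha_3$. There, several partial configurations of $\alpha_1$ and $\alpha_3$ must each be matched with an explicit blocking coalition, with the size and capacity bounds checked carefully each time.
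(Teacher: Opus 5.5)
Your overall plan is the same as the paper's: verify the two assignments directly, then run a forced-deduction case analysis. The only structural difference is that you split on $(a_2,\alpha_1)$ where the paper splits on $(a_3,\alpha_2)$. Your case $(a_2,\alpha_1)\in\Match$, your check of the false assignment, and your argument that $a_4$ must sit at $\alpha_3$ are correct. Two steps, however, fail as written.

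First, in the true assignment $a_1$ is at $\alpha_3$, which is its \emph{second} choice. So your claim that only $a_3$ and $a_4$ are off their first choice is false, and the coalitions $(\{a_1\},\alpha_1)$ and $(\{a_1,a_3\},\alpha_1)$ are never examined. They do not block, but this has to be argued. The sole occupant $a_2$ of $\alpha_1$ is ranked above $a_1$, so any coalition at $\alpha_1$ containing $a_1$ forces $R'=\emptyset$, while the lower bound is $\Size{R}-2+2=\Size{R}\ge 1$. This is exactly the paper's remark that $\alpha_1$ prefers $a_2$ to $a_1$.

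Second, in the case $(a_2,\alpha_1)\notin\Match$, your first bullet (that $a_1$ is not at $\alpha_1$) misapplies condition~(2). $R'$ may contain only residents of $\Res{\Match}{\alpha_1}$ that $\alpha_1$ ranks below every member of $R=\{a_2\}$, and $a_1$ is the only such resident. Hence $(\{a_2\},\alpha_1)$ blocks when $\Res{\Match}{\alpha_1}=\{a_1\}$. It does not block when $\Res{\Match}{\alpha_1}$ is $\{a_1,a_3\}$ or $\{a_1,a_5\}$: the lower bound is then $2$, and the only admissible choices are $R'=\emptyset$ or $R'=\{a_1\}$. Your last bullet inherits this error, because $R'=\emptyset$ is too small if $a_1$ shares $\alpha_1$ with $a_5$.

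The repair is local. Delete the first bullet. In the last bullet, when $\Res{\Match}{\alpha_1}=\{a_1,a_5\}$, use $(\{a_3\},\alpha_1)$ with $R'=\{a_1\}$, which satisfies the bounds $1\le 1\le 1$. Once $a_3$ and $a_5$ fill $\alpha_1$ and $a_4$ fills $\alpha_3$, $a_1$ is unassigned automatically, and $\Match$ is the false assignment. With these two corrections your proof is complete.
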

\begin{proof}
  Let $\Match_1$ (resp., $\Match_2$) denote the true (resp., false)
  assignment of $I$. We begin by arguing that $\Match_1$ is stable for
  $I$. Residents $a_2$ and $a_5$ are each assigned to their most
  preferred hospital, and hence cannot belong to a blocking coalition
  for $\Match_1$. Resident $a_1$ cannot belong to a blocking coalition
  for $\Match_1$ because hospital $\alpha_1$ prefers $a_2$ to
  $a_1$. Because $a_2$ has size 2, it is straightforward to argue that
  any blocking coalition involving $a_3$ and $\alpha_1$ has to be
  $(\{a_3, a_5\}, \alpha_1)$. Since resident $a_5$ cannot belong to a
  blocking coalition for $\Match_1$, we conclude that resident $a_3$
  cannot belong to a blocking coalition for $\Match_1$. Resident $a_4$
  cannot belong to a blocking coalition for $\Match_1$ because
  hospital $\alpha_3$ prefers $a_1$ to $a_4$.

  Now we argue that $\Match_2$ is stable for $I$. Hospital $\alpha_1$
  cannot belong to a blocking coalition for $\Match_2$ because
  $\Res{\Match_2}{\alpha_1} = \{a_3, a_5\}$ contains the two most
  preferred residents of $\alpha_1$. Hospital $\alpha_2$ cannot belong
  to a blocking coalition for $\Match_2$ because its lone acceptable
  resident $a_3$ is assigned to its most preferred hospital
  $\alpha_1$. Hospital $\alpha_3$ cannot belong to a blocking
  coalition for $\Match_2$ because resident $a_4$ has size two and the
  total size of all residents that $\alpha_3$ prefers to $a_4$ is only
  one.

  It remains to argue that $\Match_1$ and $\Match_2$ are the only
  stable assignments for $I$. Let $\Match$ be a stable assignment for
  $I$ and let $B$ denote $\Block{I}{\Match}$. Since $\Match$ is
  stable, $B$ is empty. We consider two cases.

  Case~1: $(a_3, \alpha_2)$ belongs to $\Match$. Since $(a_3,
  \alpha_1)$ does not belong to $B$, the Case~1 condition implies that
  $\Res{\Match}{\alpha_1}$ is either $\{a_2\}$ or $\{a_1,
  a_5\}$. Since $(a_3, \alpha_1)$ does not belong to $B$, the Case~1
  condition implies $(a_1, \alpha_1)$ does not belong to $\Match$ and
  hence $\Res{\Match}{\alpha_1} = \{a_2\}$. Since $(\{a_3, a_5\},
  \alpha_1)$ does not belong to $B$, we deduce that $(a_5, \alpha_3)$
  belongs to $\Match$. Since $(a_1, \alpha_3)$ does not belong to $B$,
  we deduce that $(a_1, \alpha_3)$ belongs to $\Match$. In summary,
  under the Case~1 condition, $\Match_1 \subseteq \Match$. Since
  hospitals $\alpha_1$, $\alpha_2$, and $\alpha_3$ are full under
  $\Match_1$, we conclude that $\Match = \Match_1$.

  Case~2: $(a_3, \alpha_2)$ does not belong to $\Match$. Since $(a_3,
  \alpha_2)$ does not belong to $B$, we deduce that $(a_3, \alpha_1)$
  belongs to $\Match$ and hence $(a_2, \alpha_1)$ does not belong to
  $\Match$. Thus $\Res{\Match}{\alpha_1} \subseteq \{a_1, a_3, a_5\}$.

  We first argue that $(a_5, \alpha_1)$ belongs to $\Match$. Assume
  for the sake of contradiction that $(a_5, \alpha_1)$ does not belong
  to $\Match$. Hence $\Res{\Match}{\alpha_1} \subseteq \{a_1,
  a_3\}$. Since $(a_1, \alpha_1)$ does not belong to $B$, we deduce
  that $(a_1, \alpha_1)$ belongs to $\Match$. Since $(a_4, \alpha_3)$
  does not belong to $B$, we deduce that $(a_4, \alpha_3)$ belongs to
  $\Match$. Thus $(a_5, \alpha_3)$ does not belong to
  $\Match$. Together with our assumption that $(a_5, \alpha_1)$ does
  not belong to $\Match$, this implies $a_5$ is unassigned in
  $\Match$. Since $\Res{\Match}{\alpha_1} \subseteq \{a_1, a_3\}$, we
  conclude that $(a_5, \alpha_1)$ belongs to $B$, a contradiction.

  Since $(a_3, \alpha_1)$ and $(a_5, \alpha_1)$ belong to $\Match$, we
  conclude that $\Res{\Match}{\alpha_1} = \{a_3, a_5\}$.

  We claim that $(a_1, \alpha_3)$ does not belong to $\Match$. Assume
  for the sake of contradiction that $(a_1, \alpha_3)$ belongs to
  $\Match$. Thus $(a_4, \alpha_3)$ does not belong to $\Match$. Hence
  $(a_5, \alpha_3)$ belongs to $B$, a contradiction.

  Since $(a_1, \alpha_3)$ does not belong to $\Match$ and $(a_4,
  \alpha_3)$ does not belong to $B$, we deduce that $(a_4, \alpha_3)$
  belongs to $\Match$. In summary, under the Case~2 condition,
  $\Match_2 \subseteq \Match$ and $\Res{\Match}{\alpha_2} =
  \emptyset$. Since hospitals $\alpha_1$ and $\alpha_3$ are full under
  $\Match_2$, we conclude that $\Match = \Match_2$.
\end{proof}

\section{NP-Hardness}
\label{sec:nph}

In the present section, we use a reduction from $\SAT$ to prove that
the decision version of $\HRSC$ is NP-hard. Since course allocation
generalizes $\HRS$, we conclude that the problem of finding a
coalition-stable assignment for a given instance of course allocation
is NP-hard, thereby resolving an open problem of Rodr\'{i}guez and
Manlove~\cite{Rodriguez2025}.

As discussed earlier, the literal gadget of Section~\ref{sec:literal}
satisfies similar abstract properties to those satisfied by the
corresponding fragment of the construction of McDermid and
Manlove~\cite{McDermid2010}. Consequently, we can use the approach
implicit in~\cite{McDermid2010} to build a clause gadget from three
copies of the literal gadget. Similarly, the bistable gadget can be
used to build a variable gadget using the approach implicit
in~\cite{McDermid2010}.\footnote{In studying the variable gadget
implicit in~\cite{McDermid2010}, which is induced by the residents in
the set $\{r_j^1, r_j^2, r_j^3, r_j^4, r_j^5, r_j^6, x_j^1, x_j^2,
y_j^1, y_j^2\}$ and the hospitals in the set $\{h_j^1, h_j^2, h_j^3,
h_j^4, h_j^T, h_j^F\}$, we observed that their construction can be
simplified. Specifically, the set of residents can be reduced to
$\{r_j^1, r_j^2, r_j^3, r_j^4, x_j^1, x_j^2, y_j^1, y_j^2\}$ and the
set of hospitals can be reduced to $\{h_j^1, h_j^2, h_j^T,
h_j^F\}$. Figure~\ref{fig:variable-gadget} shows how we construct our
variable gadget from our bistable gadget, and the same approach can be
used to construct a streamlined version of the variable gadget
implicit in~\cite{McDermid2010} from the bistable gadget implicit
in~\cite{McDermid2010}.} Finally, again following the approach
implicit in~\cite{McDermid2010}, we can build a suitable overall
$\HRSC$ instance for a given $\SAT$ formula $\Prop$ using a copy of
the variable gadget for each variable in $\Prop$ and a copy of the
clause gadget for each clause in $\Prop$.

For the sake of completeness, below we provide all of the details
concerning the construction of the variable gadget, the clause gadget,
and the overall $\HRSC$ instance corresponding to a given $\SAT$
formula $\Prop$. We emphasize that all of the constructions and
arguments in this section precisely mirror elements implicit in the
work of McDermid and Manlove~\cite{McDermid2010}.
As in Section~\ref{sec:blocks}, throughout this section we use the
term ``stability'' (resp., ``stable'') to mean ``coalition stability''
(resp., ``coalition-stable'').

\subsection{Variable Gadget}
\label{sec:variable}

\iffullversion
\begin{figure}
    \centering
    \begin{minipage}[b]{0.65\textwidth}
        \centering
        \begin{picture}(120.00,187.20)
\put(24.00,187.20){\makebox(0,0){\fbox{$\beta_1$}}}
\put(56.00,187.20){\makebox(0,0){\fbox{$\alpha_1\vphantom{\beta_1}$}}}
\put(88.00,187.20){\makebox(0,0){\fbox{$\alpha_3\vphantom{\beta_1}$}}}
\put(120.00,187.20){\makebox(0,0){\fbox{$\beta_2$}}}
\put(0.00,3.20){\makebox(0,0){$b_4$}}
\put(0.00,35.20){\makebox(0,0){$b_3$}}
\put(0.00,67.20){\makebox(0,0){\fbox{$a_4$}}}
\put(0.00,99.20){\makebox(0,0){\fbox{$a_2$}}}
\put(0.00,131.20){\makebox(0,0){$b_2$}}
\put(0.00,163.20){\makebox(0,0){$b_1$}}
\put(56.00,59.20){\line(1,0){32.00}}
\put(56.00,107.20){\line(1,0){32.00}}
\put(56.00,59.20){\line(0,1){48.00}}
\put(88.00,59.20){\line(0,1){48.00}}
\put(56.00,99.20){\circle*{6.40}}
\put(88.00,67.20){\circle*{6.40}}
\put(28.80,99.20){\vector(1,0){22.40}}
\put(24.00,99.20){\circle*{6.40}}
\put(24.00,126.40){\vector(0,-1){22.40}}
\put(24.00,131.20){\circle*{6.40}}
\put(24.00,158.40){\vector(0,-1){22.40}}
\put(24.00,163.20){\circle*{6.40}}
\put(115.20,67.20){\vector(-1,0){22.40}}
\put(120.00,67.20){\circle*{6.40}}
\put(120.00,40.00){\vector(0,1){22.40}}
\put(120.00,35.20){\circle*{6.40}}
\put(120.00,8.00){\vector(0,1){22.40}}
\put(120.00,3.20){\circle*{6.40}}
\end{picture}
        \subcaption{The variable gadget.}
        \label{fig:variable-gadget}
    \end{minipage}%
    \begin{minipage}[b]{0.3\textwidth}
        \centering
        \begin{picture}(56.00,132.00)
\put(24.00,136.00){\makebox(0,0){\fbox{$\beta_1$}}}
\put(56.00,136.00){\makebox(0,0){\fbox{$\beta_2$}}}
\put(0.00,8.00){\makebox(0,0){$b_4$}}
\put(0.00,40.00){\makebox(0,0){$b_3$}}
\put(0.00,72.00){\makebox(0,0){$b_2$}}
\put(0.00,104.00){\makebox(0,0){$b_1$}}
\put(24.00,0.00){\line(1,0){32.00}}
\put(24.00,112.00){\line(1,0){32.00}}
\put(24.00,0.00){\line(0,1){112.00}}
\put(56.00,0.00){\line(0,1){112.00}}
\put(24.00,104.00){\circle*{6.40}}
\put(24.00,72.00){\circle*{6.40}}
\put(56.00,40.00){\circle*{6.40}}
\put(56.00,8.00){\circle*{6.40}}
\end{picture}
        \subcaption{The associated icon.}
    \end{minipage}
    \caption{The preference digraph for the variable gadget includes a
      copy of the bistable gadget and has a total of $4 + 5 = 9$
      residents and $2 + 3 = 5$ hospitals. The variable gadget icon is
      used in the preference digraph of
      Figure~\ref{fig:global-construction}; the upper-left (resp.,
      lower-left, upper-right, lower-right) dot in this icon refers
      to the resident-hospital pair $(b_1, \beta_1)$ (resp., $(b_2,
      \beta_1)$, $(b_3, \beta_2)$, $(b_4, \beta_2)$).}
\end{figure}
\else
\fi

Throughout the remainder of Section~\ref{sec:variable}, let $I$
denote the variable gadget depicted in
Figure~\ref{fig:variable-gadget} and let $I'$ denote the bistable
gadget of $I$. We define the \emph{true (resp., false) assignment
of $I$} as the union of the true (resp., false) assignment of $I'$ and
$\{(b_1, \beta_1), (b_2, \beta_1), (a_4, \beta_2)\}$ (resp., $\{(a_2,
\beta_1), (b_3, \beta_2), (b_4, \beta_2)\}$).

\begin{lemma}
\label{lem:variable}
  The true and false assignments of $I$ are stable, and $I$ does not
  admit any other stable assignments.
\end{lemma}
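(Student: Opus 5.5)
The plan is to mirror the proof of Lemma~\ref{lem:literal}, with $H=\{\alpha_1,\alpha_2,\alpha_3\}$ the hospital set of the bistable gadget $I'$, and to combine Lemmas~\ref{lem:outward}, \ref{lem:inward} and~\ref{lem:bistable}.

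I read Figure~\ref{fig:variable-gadget} as follows. Hospitals $\beta_1$ and $\beta_2$ have capacity 2. The residents $b_1,\ldots,b_4$ have size 1, and each has a single acceptable hospital: $\beta_1$ for $b_1,b_2$ and $\beta_2$ for $b_3,b_4$. Hospital $\beta_1$ ranks $a_2$ first, then $b_2$, then $b_1$. Hospital $\beta_2$ ranks $a_4$ first, then $b_3$, then $b_4$. Resident $a_2$ prefers $\alpha_1$ to $\beta_1$, and resident $a_4$ prefers $\alpha_3$ to $\beta_2$.

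The key structural observation is $\Bdry{I}{H}{\Match}=\emptyset$ for every assignment $\Match$ of $I$. The only residents of $I'$ with an acceptable hospital outside $H$ are $a_2$ and $a_4$. Within $I'$ their only acceptable hospitals are $\alpha_1$ and $\alpha_3$ respectively, and each strictly prefers that hospital to its outside option $\beta_1$ or $\beta_2$. So no coalition in $\Block{I'}{\Proj{\Match}{H}}$ can contain a resident who prefers some hospital outside $H$ to the coalition's hospital.

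\emph{Stability of the true and false assignments.} Let $\Match^*$ be either assignment. Its restriction to $H$ is the true or false assignment of $I'$, which is stable by Lemma~\ref{lem:bistable}. Lemma~\ref{lem:outward} then gives $\Proj{\Block{I}{\Match^*}}{H}=\emptyset$. It remains to rule out coalitions at $\beta_1$ and $\beta_2$ directly.
\begin{itemize}
\item In the true assignment, $\beta_1$ is full with $\{b_1,b_2\}$, and the only resident it prefers to them is $a_2$, who holds its preferred $\alpha_1$. Likewise $\beta_2$ is full with its top choice $a_4$, of size 2. Since $b_3,b_4$ are ranked below $a_4$, no set $R'$ of residents that $\beta_2$ ranks below all of $R$ can be nonempty, so no coalition at $\beta_2$ exists.
\item In the false assignment, $\beta_1$ holds its top choice $a_2$ (size 2), which blocks any coalition of $b$'s in the same way. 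Also $\beta_2$ is full with $\{b_3,b_4\}$, and $a_4$ holds its preferred $\alpha_3$.
\end{itemize}

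\emph{Uniqueness.} Let $\Match$ be stable for $I$. Since $\Bdry{I}{H}{\Match}=\emptyset$, Lemma~\ref{lem:inward} gives $\Block{I'}{\Proj{\Match}{H}}\subseteq \Proj{\Block{I}{\Match}}{H}=\emptyset$. So by Lemma~\ref{lem:bistable}, $\Proj{\Match}{H}$ is the true or the false assignment of $I'$.
\begin{itemize}
\item Suppose it is the true assignment. Then $a_4$ is not matched in $H$, so $a_4$ must be assigned to $\beta_2$; otherwise $(\{a_4\},\beta_2)$ blocks, because $a_4$ is $\beta_2$'s top choice and $\Cap{\beta_2}=2$ lets us take $R'=\Res{\Match}{\beta_2}$. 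Now $\beta_2$ is full, so $b_3,b_4$ are unassigned. Since $a_2$ is at $\alpha_1$, both $b_1$ and $b_2$ must be at $\beta_1$; otherwise $\beta_1$ is under-subscribed and a missing $b_i$ blocks. This yields exactly the true assignment of $I$.
\item The false case is symmetric. Now $a_2$ must be at $\beta_1$, which forces $b_1,b_2$ out. Then $b_3,b_4$ must fill $\beta_2$, since $a_4$ sits at $\alpha_3$.
\end{itemize}

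The main point needing care is verifying that the boundary set is empty, and that the size-2 top choices $a_2$ and $a_4$ really prevent the $b$ residents from forming coalitions. Both are short, direct checks of condition~(2) in the definition of a blocking coalition.
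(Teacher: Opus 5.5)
Your proof is correct and follows essentially the same route as the paper. For stability, you use Lemma~\ref{lem:outward} plus direct checks at $\beta_1$ and $\beta_2$; for uniqueness, you use Lemma~\ref{lem:inward} with $\Bdry{I}{H}{\Match}=\emptyset$ and then the two-case split from Lemma~\ref{lem:bistable}, and you also supply the short justification that the boundary set is empty, which the paper simply asserts.
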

\begin{deferproof}{Lemma~\ref{lem:variable}}
  Let $\Match_1$ (resp., $\Match_2$) denote the true (resp., false)
  assignment of $I$, let $B_1$ (resp., $B_2$) denote
  $\Block{I}{\Match_1}$ (resp., $\Block{I}{\Match_2}$), let $H$ denote
  $\Hosps{I'}$, let $\Match_1'$ (resp., $\Match_2'$) denote
  $\Proj{\Match_1}{H}$ (resp., $\Proj{\Match_2}{H}$), and let $B_1'$
  (resp., $B_2'$) denote $\Block{I'}{\Match_1'}$ (resp.,
  $\Block{I'}{\Match_2'}$). Thus $\Match_1'$ (resp., $\Match_2'$) is
  the true (resp., false) assignment of $I'$ and $B_1' = B_2' =
  \emptyset$.

  We begin by proving that $\Match_1$ is stable; a symmetric argument
  establishes that $\Match_2$ is stable. Since $\Match_1'$ is
  $H$-stable, Lemma~\ref{lem:outward} implies $\Match_1$ is
  $H$-stable. Since hospital $\beta_2$ is assigned to its most
  preferred resident $a_4$ and is full under $\Match_1$, we conclude
  that $\Match_1$ is $\{\beta_2\}$-stable. Since residents $a_2$,
  $b_1$, and $b_2$ are each assigned to their most preferred hospital
  in $\Match_1$, no blocking coalition in $B_1$ involves $a_2$, $b_1$,
  or $b_2$. It follows that $\Match_1$ is $\{\beta_1\}$-stable. Since
  $\Match_1$ is $(H \cup \{\beta_1, \beta_2\})$-stable, we conclude
  that $\Match_1$ is stable for $I$.

  It remains to prove that $\Match_1$ and $\Match_2$ are the only
  stable assignments for $I$. Let $\Match$ be a stable assignment for
  $I$, let $B$ denote $\Block{I}{\Match} = \emptyset$, and let
  $\Match'$ denote $\Proj{\Match}{H}$. Since $\Bdry{I}{H}{\Match} =
  \emptyset$ and $\Match$ is $H$-stable, Lemma~\ref{lem:inward}
  implies $\Match'$ is $H$-stable. Hence $\Match'$ is a stable
  assignment for $I'$. By Lemma~\ref{lem:bistable}, there are two
  cases to consider.

  Case~1: $\Match' = \Match_1'$. Since $(a_2, \alpha_1)$
  belongs to $\Match_1'$, the Case~1 condition implies $(a_2,
  \beta_1)$ does not belong to $\Match$. Since $(b_1, \beta_1)$
  (resp., $(b_2, \beta_1)$) does not belong to $B$, we deduce that
  $(b_1, \beta_1)$ (resp., $(b_2, \beta_1)$) belongs to
  $\Match$. Since $(a_4, \alpha_3)$ does not belong to $\Match_1'$,
  the Case~1 condition implies $(a_4, \alpha_3)$ does not belong to
  $\Match$. Since $(a_4, \beta_2)$ does not belong to $B$, we deduce
  that $(a_4, \beta_2)$ belongs to $\Match$. Thus residents $b_3$ and
  $b_4$ are unassigned in $\Match$. In summary, $\Match = \Match_1$.

  Case~2: $\Match' = \Match_2'$. By a symmetric argument to that used
  in Case~1, we find that $\Match = \Match_2$.
\end{deferproof}

\iffullversion
\else
\begin{figure}[t]
    \centering
    \begin{picture}(120.00,187.20)
\put(24.00,187.20){\makebox(0,0){\fbox{$\beta_1$}}}
\put(56.00,187.20){\makebox(0,0){\fbox{$\alpha_1\vphantom{\beta_1}$}}}
\put(88.00,187.20){\makebox(0,0){\fbox{$\alpha_3\vphantom{\beta_1}$}}}
\put(120.00,187.20){\makebox(0,0){\fbox{$\beta_2$}}}
\put(0.00,3.20){\makebox(0,0){$b_4$}}
\put(0.00,35.20){\makebox(0,0){$b_3$}}
\put(0.00,67.20){\makebox(0,0){\fbox{$a_4$}}}
\put(0.00,99.20){\makebox(0,0){\fbox{$a_2$}}}
\put(0.00,131.20){\makebox(0,0){$b_2$}}
\put(0.00,163.20){\makebox(0,0){$b_1$}}
\put(56.00,59.20){\line(1,0){32.00}}
\put(56.00,107.20){\line(1,0){32.00}}
\put(56.00,59.20){\line(0,1){48.00}}
\put(88.00,59.20){\line(0,1){48.00}}
\put(56.00,99.20){\circle*{6.40}}
\put(88.00,67.20){\circle*{6.40}}
\put(28.80,99.20){\vector(1,0){22.40}}
\put(24.00,99.20){\circle*{6.40}}
\put(24.00,126.40){\vector(0,-1){22.40}}
\put(24.00,131.20){\circle*{6.40}}
\put(24.00,158.40){\vector(0,-1){22.40}}
\put(24.00,163.20){\circle*{6.40}}
\put(115.20,67.20){\vector(-1,0){22.40}}
\put(120.00,67.20){\circle*{6.40}}
\put(120.00,40.00){\vector(0,1){22.40}}
\put(120.00,35.20){\circle*{6.40}}
\put(120.00,8.00){\vector(0,1){22.40}}
\put(120.00,3.20){\circle*{6.40}}
\end{picture}
    \caption{The preference digraph for the variable gadget includes a
      copy of the bistable gadget and has a total of $4 + 5 = 9$
      residents and $2 + 3 = 5$ hospitals.}
    \label{fig:variable-gadget}    
\end{figure}
\fi

\subsection{Clause Gadget}
\label{sec:clause}

In this section, we describe a clause gadget that contains three
copies of the literal gadget. The three copies of the literal gadget
have pairwise disjoint sets of agents (7 residents and 5 hospitals
each); we use a superscript $\ell$ in $\{1,2,3\}$ to identify the
agents in copy $\ell$. For example, we write $t_1^\ell$ to refer to
the resident $t_1$ in copy $\ell$. See Figure~\ref{fig:clause-gadget}
for a depiction of the clause gadget.

\begin{figure}
    \centering
    \begin{picture}(132.80,96.00)
\put(24.00,96.00){\makebox(0,0){\fbox{$\gamma\vphantom{\tau_1^1}$}}}
\put(56.00,96.00){\makebox(0,0){$\tau_1^1$}}
\put(88.00,96.00){\makebox(0,0){$\tau_1^2$}}
\put(120.00,96.00){\makebox(0,0){$\tau_1^3$}}
\put(0.00,8.00){\makebox(0,0){$t_1^3$}}
\put(0.00,40.00){\makebox(0,0){$t_1^2$}}
\put(0.00,72.00){\makebox(0,0){$t_1^1$}}
\put(24.00,8.00){\circle*{6.40}}
\put(120.00,8.00){\circle*{6.40}}
\put(120.00,8.00){\line(2,1){16.00}}
\put(120.00,8.00){\line(2,-1){16.00}}
\put(136.00,0.00){\line(0,1){16.00}}
\put(115.20,8.00){\vector(-1,0){86.40}}
\put(24.00,12.80){\vector(0,1){22.40}}
\put(24.00,40.00){\circle*{6.40}}
\put(88.00,40.00){\circle*{6.40}}
\put(88.00,40.00){\line(2,1){16.00}}
\put(88.00,40.00){\line(2,-1){16.00}}
\put(104.00,32.00){\line(0,1){16.00}}
\put(83.20,40.00){\vector(-1,0){54.40}}
\put(24.00,44.80){\vector(0,1){22.40}}
\put(24.00,72.00){\circle*{6.40}}
\put(56.00,72.00){\circle*{6.40}}
\put(56.00,72.00){\line(2,1){16.00}}
\put(56.00,72.00){\line(2,-1){16.00}}
\put(72.00,64.00){\line(0,1){16.00}}
\put(51.20,72.00){\vector(-1,0){22.40}}
\end{picture}
    \caption{The preference digraph for the clause gadget includes
      three copies of the literal gadget and has $3 \times 7 = 21$
      residents and $1 + 3 \times 5 = 16$ hospitals.}
    \label{fig:clause-gadget}
\end{figure}

Throughout the remainder of Section~\ref{sec:clause}, let $I$ denote
the clause gadget of Figure~\ref{fig:clause-gadget} and for any $\ell$
in $\{1,2,3\}$, let $I^\ell$ denote literal gadget $\ell$ of $I$ and
let $H^\ell$ denote $\Hosps{I^\ell}$. We say that an assignment
$\Match$ of $I$ is critical if for all $\ell$ in $\{1,2,3\}$,
$\Proj{\Match}{H^\ell}$ is the critical assignment of $I^\ell$. We say
that an assignment $\Match$ of $I$ is nice if every blocking coalition
in $\Block{I}{\Match}$ either involves hospital $\gamma$ or belongs to
$\{(t_1^1, \tau_1^1), (t_1^2, \tau_1^2), (t_1^3, \tau_1^3)\}$.

\begin{lemma}
  \label{lem:if-clause}
  Let $\Match$ be a critical assignment of $I$, let $\ell$ belong to
  $\{1,2,3\}$, let $B$ denote $\Block{I}{\Match}$, and let $B'$ denote
  $\Proj{B}{H^\ell}$. Then $B' \subseteq \{(t_1^\ell, \tau_1^\ell)\}$.
\end{lemma}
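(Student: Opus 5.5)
The plan is to apply Lemma~\ref{lem:outward} to the sub-instance $\Proj{I}{H^\ell}$ and then invoke Lemma~\ref{lem:literal}. First I check that $\Proj{I}{H^\ell}$ is exactly the literal gadget $I^\ell$. In the clause gadget, the only residents of $I^\ell$ with acceptable hospitals outside $H^\ell$ are the $t_1^\ell$, each of which additionally finds $\gamma$ acceptable. No resident outside $I^\ell$ finds a hospital of $H^\ell$ acceptable, because the three copies have pairwise disjoint agents and $\gamma$'s residents are the $t_1^\ell$. So restricting to $H^\ell$ keeps exactly the residents of $I^\ell$. It also prunes $\gamma$ from the list of $t_1^\ell$, leaving $t_1^\ell$ with sole acceptable hospital $\tau_1^\ell$, as in the literal gadget. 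Hence $\Proj{I}{H^\ell} = I^\ell$, with identical sizes, capacities and preference lists.

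Next, let $\Match^\ell$ denote $\Proj{\Match}{H^\ell}$. Since $\Match$ is critical, $\Match^\ell$ is the critical assignment of $I^\ell$. By Lemma~\ref{lem:literal}, $\Block{I^\ell}{\Match^\ell} = \{(t_1^\ell, \tau_1^\ell)\}$. Lemma~\ref{lem:outward}, applied with $H = H^\ell$, gives $B' = \Proj{B}{H^\ell} \subseteq \Block{\Proj{I}{H^\ell}}{\Match^\ell} = \{(t_1^\ell, \tau_1^\ell)\}$, which is the claim.

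The only point requiring care is the identification $\Proj{I}{H^\ell} = I^\ell$. This depends on reading Figure~\ref{fig:clause-gadget} correctly: the sole cross-gadget edges are those between the $t_1^\ell$ and $\gamma$, and the hospitals of $H^\ell$ have no acceptable residents outside copy $\ell$. Once this is verified, the rest is a direct chaining of the two lemmas.
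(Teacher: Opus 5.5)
Your proposal is correct and follows essentially the same route as the paper: you identify $\Proj{\Match}{H^\ell}$ as the critical assignment of $I^\ell$, use Lemma~\ref{lem:literal} to get its blocking set $\{(t_1^\ell,\tau_1^\ell)\}$, and transfer this to $B'$ via Lemma~\ref{lem:outward}. Your explicit verification that $\Proj{I}{H^\ell}$ coincides with $I^\ell$ (only $\gamma$ is pruned from the list of $t_1^\ell$, and no outside resident finds a hospital of $H^\ell$ acceptable) is a step the paper leaves implicit, and you check it correctly.
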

\begin{deferproof}{Lemma~\ref{lem:if-clause}}
  Let $\Match'$ denote $\Proj{\Match}{H^\ell}$ and let $B''$ denote
  $\Block{I^\ell}{\Match'}$. Since $\Match$ is critical, $\Match'$ is
  the critical assignment of literal gadget $I^\ell$. Hence
  Lemma~\ref{lem:literal} implies $B'' = \{(t^\ell_1,
  \tau^\ell_1)\}$. By Lemma~\ref{lem:outward}, we conclude that $B'
  \subseteq \{(t_1^\ell, \tau_1^\ell)\}$.
\end{deferproof}

\begin{lemma}
 \label{lem:only-if-clause}
 Let $\Match$ be a nice assignment of $I$, let $\ell$ belong
 to $\{1,2,3\}$, and let $\Match'$ denote
 $\Proj{\Match}{H^\ell}$. Then $\Match'$ is the critical assignment of
 $I^\ell$.
\end{lemma}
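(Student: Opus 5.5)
The plan is to reduce to Lemma~\ref{lem:literal} via Lemma~\ref{lem:inward}, applied with $H = H^\ell$. Let $B$ denote $\Block{I}{\Match}$ and let $B''$ denote $\Block{I^\ell}{\Match'}$. Lemma~\ref{lem:inward} gives
\[
B'' \subseteq \Proj{B}{H^\ell} \cup \Bdry{I}{H^\ell}{\Match}.
\]
Since $\gamma \notin H^\ell$ and $\Match$ is nice, every blocking coalition in $\Proj{B}{H^\ell}$ belongs to $\{(t_1^1,\tau_1^1),(t_1^2,\tau_1^2),(t_1^3,\tau_1^3)\}$ and involves a hospital in $H^\ell$. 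Hence $\Proj{B}{H^\ell} \subseteq \{(t_1^\ell, \tau_1^\ell)\}$.

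The next step is to control the boundary term. In the clause gadget, the only resident of $I^\ell$ with an acceptable hospital outside $H^\ell$ is $t_1^\ell$, whose extra hospital is $\gamma$. Within $I^\ell$, the only acceptable hospital of $t_1^\ell$ is $\tau_1^\ell$. Any $(R,h) \in \Bdry{I}{H^\ell}{\Match}$ must therefore contain $t_1^\ell$ and have $h = \tau_1^\ell$. Since $\Cap{\tau_1^\ell} = 1$ and $\Size{t_1^\ell} = 1$, the coalition must be $R = \{t_1^\ell\}$: the only other acceptable resident of $\tau_1^\ell$ is $t_2^\ell$, and a size-2 coalition cannot satisfy the right-hand inequality with $R'$ of size at most one. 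Thus $\Bdry{I}{H^\ell}{\Match} \subseteq \{(t_1^\ell,\tau_1^\ell)\}$, and so $B'' \subseteq \{(t_1^\ell,\tau_1^\ell)\}$.

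Finally I would apply Lemma~\ref{lem:literal}. The literal gadget admits no stable assignment, so $B'' \neq \emptyset$. Combined with the inclusion above, this gives $B'' = \{(t_1^\ell,\tau_1^\ell)\}$, and the uniqueness part of Lemma~\ref{lem:literal} then says $\Match'$ is the critical assignment of $I^\ell$.

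The main obstacle is the boundary step: I need to read off from Figure~\ref{fig:clause-gadget} exactly which residents of $I^\ell$ see hospitals outside $H^\ell$. Only the $t_1^\ell$ residents are linked to $\gamma$, and no resident of $I^\ell$ sees a hospital of another copy. I also need the capacity and size check ruling out coalitions $\{t_1^\ell,t_2^\ell\}$ at $\tau_1^\ell$. One point to verify is whether $t_2^\ell$ even prefers $\tau_1^\ell$ to its assignment; the size argument settles the matter regardless.
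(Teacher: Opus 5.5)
Your argument is correct and is essentially the paper's proof, phrased directly instead of by contradiction. Both combine Lemma~\ref{lem:inward} with $H=H^\ell$, niceness to confine $\Proj{B}{H^\ell}$ to $\{(t_1^\ell,\tau_1^\ell)\}$, the bound $\Bdry{I}{H^\ell}{\Match}\subseteq\{(t_1^\ell,\tau_1^\ell)\}$ (which you justify where the paper only asserts it), and the non-existence and uniqueness parts of Lemma~\ref{lem:literal}; one small slip is that for $R=\{t_1^\ell,t_2^\ell\}$ the failing inequality is the left-hand one, $\Size{R}-\Cap{\tau_1^\ell}+\Size{\Res{\Match}{\tau_1^\ell}}\le\Size{R'}$ (its left side is $1+\Size{\Res{\Match}{\tau_1^\ell}}>\Size{R'}$), whereas the right-hand inequality $\Size{R'}\le\Size{R}$ holds trivially.
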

\begin{deferproof}{Lemma~\ref{lem:only-if-clause}}
  Let $B$ denote $\Block{I}{\Match}$, let $B'$ denote
  $\Proj{B}{H^\ell}$, let $B''$ denote $\Block{I^\ell}{\Match'}$, and
  assume for the sake of contradiction that $\Match'$ is not the
  critical assignment of $I^\ell$. Lemma~\ref{lem:literal} implies
  $B''$ includes a blocking coalition $(R,h) \neq (t_1^\ell,
  \tau_1^\ell)$ such that $h$ belongs to $H^\ell$. Since
  $\Bdry{I^\ell}{H^\ell}{\Match'} = \{(t_1^\ell, \tau_1^\ell)\}$ and
  $(R,h) \neq (t_1^\ell, \tau_1^\ell)$, Lemma~\ref{lem:inward} implies
  $(R,h)$ belongs to $B'$ and hence also to $B$. Since $h$ belongs to
  $H^\ell$, we deduce that $\Match$ is not a nice assignment of $I$, a
  contradiction.
\end{deferproof}

\begin{lemma}
  \label{lem:clause}
  Let $\Match$ be an assignment of $I$. Then $\Match$ is a nice
  assignment of $I$ if and only if $\Match$ is a critical assignment
  of $I$.
\end{lemma}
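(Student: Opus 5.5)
The plan is to prove the two directions separately, using Lemma~\ref{lem:only-if-clause} for one and Lemma~\ref{lem:if-clause} for the other.

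\emph{Nice implies critical.} This is immediate from Lemma~\ref{lem:only-if-clause}. If $\Match$ is nice, then for every $\ell$ in $\{1,2,3\}$ the lemma says $\Proj{\Match}{H^\ell}$ is the critical assignment of $I^\ell$. That is exactly the definition of $\Match$ being critical.

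\emph{Critical implies nice.} Let $\Match$ be critical and let $(R,h)$ be a blocking coalition in $\Block{I}{\Match}$. We must show that either $h=\gamma$ or $(R,h)$ belongs to $\{(t_1^1,\tau_1^1),(t_1^2,\tau_1^2),(t_1^3,\tau_1^3)\}$. The hospitals of $I$ are $\gamma$ together with $H^1\cup H^2\cup H^3$. So if $h\neq\gamma$, then $h$ lies in $H^\ell$ for some $\ell$, and $(R,h)$ belongs to $\Proj{B}{H^\ell}$ with $B=\Block{I}{\Match}$. Lemma~\ref{lem:if-clause} then gives $(R,h)=(t_1^\ell,\tau_1^\ell)$, as required.

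One sanity check is needed: the literal gadgets must interface with the rest of $I$ as the earlier lemmas assume. Only the residents $t_1^\ell$ have an acceptable hospital (namely $\gamma$) outside $H^\ell$. This means $\Proj{I}{H^\ell}$ is exactly $I^\ell$, which is what justifies applying Lemmas~\ref{lem:if-clause} and~\ref{lem:only-if-clause}. Since both directions reduce directly to those lemmas, I expect no real obstacle. The only point needing care is confirming that every hospital other than $\gamma$ lies in some $H^\ell$, and this holds by construction of the clause gadget.
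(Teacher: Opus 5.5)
Your proposal is correct and matches the paper's proof: critical implies nice by applying Lemma~\ref{lem:if-clause} to each $H^\ell$, using that the hospitals of $I$ are $\gamma$ together with $H^1\cup H^2\cup H^3$, and nice implies critical directly from Lemma~\ref{lem:only-if-clause}. Your interface check (only $t_1^\ell$ has an acceptable hospital outside $H^\ell$) is accurate, although it is really needed inside the proofs of those two lemmas rather than in this one.
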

\begin{deferproof}{Lemma~\ref{lem:clause}}
  We first address the if direction. Assume that $\Match$ is a
  critical assignment of $I$. Let $\ell$ belong to $\{1,2,3\}$, let
  $B$ denote $\Block{I}{\Match}$, and let $B'$ denote
  $\Proj{B}{H^\ell}$. Lemma~\ref{lem:if-clause} implies $B' \subseteq
  \{(t_1^\ell, \tau_1^\ell)\}$. It follows that $\Match$ is a nice
  assignment of $I$.

  We now address the only if direction. Assume that $\Match$ is a nice
  assignment of $I$. Lemma~\ref{lem:only-if-clause} implies that for
  any $\ell$ in $\{1,2,3\}$, $\Proj{\Match}{H^\ell}$ is the critical
  assignment of $I^\ell$. Thus $\Match$ is a critical assignment of
  $I$, as required.
\end{deferproof}

\subsection{Transformation}
\label{sec:transform}

\iffullversion
\begin{figure}[h]
    \centering
    \begin{picture}(260.80,392.00)
\put(88.00,392.00){\makebox(0,0){\fbox{$\gamma^1$}}}
\put(120.00,392.00){\makebox(0,0){\fbox{$\gamma^2$}}}
\put(152.00,392.00){\makebox(0,0){\fbox{$\gamma^3$}}}
\put(184.00,392.00){\makebox(0,0){\fbox{$\gamma^4$}}}
\put(0.00,8.00){\makebox(0,0){$b_4^3$}}
\put(0.00,40.00){\makebox(0,0){$b_3^3$}}
\put(0.00,72.00){\makebox(0,0){$b_2^3$}}
\put(0.00,104.00){\makebox(0,0){$b_1^3$}}
\put(0.00,136.00){\makebox(0,0){$b_4^2$}}
\put(0.00,168.00){\makebox(0,0){$b_3^2$}}
\put(0.00,200.00){\makebox(0,0){$b_2^2$}}
\put(0.00,232.00){\makebox(0,0){$b_1^2$}}
\put(0.00,264.00){\makebox(0,0){$b_4^1$}}
\put(0.00,296.00){\makebox(0,0){$b_3^1$}}
\put(0.00,328.00){\makebox(0,0){$b_2^1$}}
\put(0.00,360.00){\makebox(0,0){$b_1^1$}}
\put(88.00,44.80){\vector(0,1){182.40}}
\put(88.00,236.80){\vector(0,1){118.40}}
\put(120.00,108.80){\vector(0,1){54.40}}
\put(120.00,172.80){\vector(0,1){118.40}}
\put(152.00,76.80){\vector(0,1){54.40}}
\put(152.00,140.80){\vector(0,1){182.40}}
\put(184.00,12.80){\vector(0,1){182.40}}
\put(184.00,204.80){\vector(0,1){54.40}}
\put(24.00,0.00){\line(1,0){32.00}}
\put(24.00,112.00){\line(1,0){32.00}}
\put(24.00,0.00){\line(0,1){112.00}}
\put(56.00,0.00){\line(0,1){112.00}}
\put(24.00,104.00){\circle*{6.40}}
\put(24.00,72.00){\circle*{6.40}}
\put(56.00,40.00){\circle*{6.40}}
\put(56.00,8.00){\circle*{6.40}}
\put(216.00,8.00){\circle*{6.40}}
\put(216.00,8.00){\line(2,1){16.00}}
\put(216.00,8.00){\line(2,-1){16.00}}
\put(232.00,0.00){\line(0,1){16.00}}
\put(211.20,8.00){\vector(-1,0){22.40}}
\put(179.20,8.00){\vector(-1,0){118.40}}
\put(184.00,8.00){\circle*{6.40}}
\put(216.00,40.00){\circle*{6.40}}
\put(216.00,40.00){\line(2,1){16.00}}
\put(216.00,40.00){\line(2,-1){16.00}}
\put(232.00,32.00){\line(0,1){16.00}}
\put(211.20,40.00){\vector(-1,0){118.40}}
\put(83.20,40.00){\vector(-1,0){22.40}}
\put(88.00,40.00){\circle*{6.40}}
\put(216.00,72.00){\circle*{6.40}}
\put(216.00,72.00){\line(2,1){16.00}}
\put(216.00,72.00){\line(2,-1){16.00}}
\put(232.00,64.00){\line(0,1){16.00}}
\put(211.20,72.00){\vector(-1,0){54.40}}
\put(147.20,72.00){\vector(-1,0){118.40}}
\put(152.00,72.00){\circle*{6.40}}
\put(216.00,104.00){\circle*{6.40}}
\put(216.00,104.00){\line(2,1){16.00}}
\put(216.00,104.00){\line(2,-1){16.00}}
\put(232.00,96.00){\line(0,1){16.00}}
\put(211.20,104.00){\vector(-1,0){86.40}}
\put(115.20,104.00){\vector(-1,0){86.40}}
\put(120.00,104.00){\circle*{6.40}}
\put(24.00,128.00){\line(1,0){32.00}}
\put(24.00,240.00){\line(1,0){32.00}}
\put(24.00,128.00){\line(0,1){112.00}}
\put(56.00,128.00){\line(0,1){112.00}}
\put(24.00,232.00){\circle*{6.40}}
\put(24.00,200.00){\circle*{6.40}}
\put(56.00,168.00){\circle*{6.40}}
\put(56.00,136.00){\circle*{6.40}}
\put(216.00,136.00){\circle*{6.40}}
\put(216.00,136.00){\line(2,1){16.00}}
\put(216.00,136.00){\line(2,-1){16.00}}
\put(232.00,128.00){\line(0,1){16.00}}
\put(211.20,136.00){\vector(-1,0){54.40}}
\put(147.20,136.00){\vector(-1,0){86.40}}
\put(152.00,136.00){\circle*{6.40}}
\put(216.00,168.00){\circle*{6.40}}
\put(216.00,168.00){\line(2,1){16.00}}
\put(216.00,168.00){\line(2,-1){16.00}}
\put(232.00,160.00){\line(0,1){16.00}}
\put(211.20,168.00){\vector(-1,0){86.40}}
\put(115.20,168.00){\vector(-1,0){54.40}}
\put(120.00,168.00){\circle*{6.40}}
\put(216.00,200.00){\circle*{6.40}}
\put(216.00,200.00){\line(2,1){16.00}}
\put(216.00,200.00){\line(2,-1){16.00}}
\put(232.00,192.00){\line(0,1){16.00}}
\put(211.20,200.00){\vector(-1,0){22.40}}
\put(179.20,200.00){\vector(-1,0){150.40}}
\put(184.00,200.00){\circle*{6.40}}
\put(216.00,232.00){\circle*{6.40}}
\put(216.00,232.00){\line(2,1){16.00}}
\put(216.00,232.00){\line(2,-1){16.00}}
\put(232.00,224.00){\line(0,1){16.00}}
\put(211.20,232.00){\vector(-1,0){118.40}}
\put(83.20,232.00){\vector(-1,0){54.40}}
\put(88.00,232.00){\circle*{6.40}}
\put(24.00,256.00){\line(1,0){32.00}}
\put(24.00,368.00){\line(1,0){32.00}}
\put(24.00,256.00){\line(0,1){112.00}}
\put(56.00,256.00){\line(0,1){112.00}}
\put(24.00,360.00){\circle*{6.40}}
\put(24.00,328.00){\circle*{6.40}}
\put(56.00,296.00){\circle*{6.40}}
\put(56.00,264.00){\circle*{6.40}}
\put(216.00,264.00){\circle*{6.40}}
\put(216.00,264.00){\line(2,1){16.00}}
\put(216.00,264.00){\line(2,-1){16.00}}
\put(232.00,256.00){\line(0,1){16.00}}
\put(211.20,264.00){\vector(-1,0){22.40}}
\put(179.20,264.00){\vector(-1,0){118.40}}
\put(184.00,264.00){\circle*{6.40}}
\put(216.00,296.00){\circle*{6.40}}
\put(216.00,296.00){\line(2,1){16.00}}
\put(216.00,296.00){\line(2,-1){16.00}}
\put(232.00,288.00){\line(0,1){16.00}}
\put(211.20,296.00){\vector(-1,0){86.40}}
\put(115.20,296.00){\vector(-1,0){54.40}}
\put(120.00,296.00){\circle*{6.40}}
\put(216.00,328.00){\circle*{6.40}}
\put(216.00,328.00){\line(2,1){16.00}}
\put(216.00,328.00){\line(2,-1){16.00}}
\put(232.00,320.00){\line(0,1){16.00}}
\put(211.20,328.00){\vector(-1,0){54.40}}
\put(147.20,328.00){\vector(-1,0){118.40}}
\put(152.00,328.00){\circle*{6.40}}
\put(216.00,360.00){\circle*{6.40}}
\put(216.00,360.00){\line(2,1){16.00}}
\put(216.00,360.00){\line(2,-1){16.00}}
\put(232.00,352.00){\line(0,1){16.00}}
\put(211.20,360.00){\vector(-1,0){118.40}}
\put(83.20,360.00){\vector(-1,0){54.40}}
\put(88.00,360.00){\circle*{6.40}}
\end{picture}
    \caption{The preference digraph for the $\HRSC$ instance
      corresponding to the $\SAT$ formula $(x_1 \lor x_2 \lor
      \neg{x_3}) \land (\neg{x_1} \lor \neg{x_2} \lor x_3) \land (x_1
      \lor \neg{x_2} \lor x_3) \land (\neg{x_1} \lor x_2 \lor
      \neg{x_3})$. Notice that while the dots associated with the
      literal icons all lie in the same column, there are no directed
      edges within this column. This is because each dot in this
      column is associated with a different copy of the literal gadget
      hospital $\tau_1$. For example, the topmost three dots in this
      column are associated with hospitals $\tau_1^{1,1}$,
      $\tau_1^{3,1}$, and $\tau_1^{2,1}$, respectively.}
    \label{fig:global-construction}
\end{figure}
\fi

Throughout this section, we fix a $\SAT$ instance $\Prop$. Below we
describe how to transform $\Prop$ into an instance $I$ of $\HRSC$ so
that $I$ admits a stable assignment if and only if $\Prop$ is
satisfiable. It will be evident that the transformation is
polynomial-time computable.

Let $n$ be a positive integer such that $\Prop$ has $4n$ clauses and
$3n$ variables. We index the clauses (resp., variables) from 1 to $4n$
(resp., $3n$). We use two schemes for identifying the $12n$ literals
of $\Prop$. For any $i$ in $\{1,\dots, 4n\}$ and any $\ell$ in
$\{1,2,3\}$, we say that the literal in position $\ell$ of clause $i$
has A-identifier $(i,\ell)$. For any $j$ in $\{1,\dots,3n\}$, we say
that a literal has B-identifier $(j,1)$ (resp., $(j,2)$, $(j,3)$,
$(j,4)$) if it is the first positive (resp., second positive, first
negative, second negative) occurrence of variable $j$ in $\Prop$. In
order to translate between these two schemes, we define the bijection
$\BijFn: \{1,\dots,4n\} \times \{1,2,3\} \rightarrow \{1,\dots,3n\}
\times \{1,2,3,4\}$ to map any given A-identifier to the corresponding
B-identifier. Our instance $I$ will be constructed from $3n$ variable
gadgets and $4n$ clause gadgets. Below we discuss these components in
greater detail.

As discussed in Section~\ref{sec:variable}, our variable gadget has
$9$ residents and $5$ hospitals. We create $3n$ copies of this gadget,
one for each variable in $\Prop$; each copy has its own set of $9$
residents and $5$ hospitals. For any $j$ in $\{1,\ldots,3n\}$, we
define $V^j$ as copy $j$ of the variable gadget. We use a superscript
$j$ to identify the agents in $V^j$. For example, we write $a_1^j$ to
refer to the resident corresponding to $a_1$ in $V^j$. In our overall
instance $I$, the preference list of each hospital in $V^j$ is the
same as in the isolated gadget $V^j$; likewise, the preference list of
each resident in $V^j$ that does not belong to
$\{b^j_1,\ldots,b^j_4\}$ is the same as in the isolated gadget
$V^j$. Each resident in $\{b^j_1, \ldots, b^j_4\}$ has a preference
list of length 1 in $V^j$ and of length 3 in $I$: the most preferred
hospital of $b^j_k$ in $I$ is its lone acceptable hospital in $V^j$;
the two remaining hospitals on the preference list of $b^j_k$ do not
belong to $V^j$ and are specified below.

As discussed in Section~\ref{sec:clause}, our clause gadget has $21$
residents and $16$ hospitals. We create $4n$ copies of this gadget,
one for each clause in $\Prop$; each copy has its own set of $21$
residents and $16$ hospitals. For any $i$ in $\{1,\ldots,4n\}$, we
define $C^i$ as copy $i$ of the clause gadget. We use a superscript
$i$ to identify the agents in $C^i$. For example, we write $t_1^{i,1}$
to refer to the resident corresponding to $t_1^1$ in $C^i$. In our
overall instance $I$, the preference list of each hospital in $C^i$ is
the same as in the isolated gadget $C^i$; likewise, the preference
list of each resident in $C^i$ that does not belong to $\{t_1^{i,1},
t_1^{i,2}, t_1^{i,3}\}$ is the same as in the isolated gadget $C^i$.

It remains to describe the preference lists of the residents $b^j_k$
and $t_1^{i,\ell}$ in $I$. For each A-identifier $(i,\ell)$, we
identify resident $t_1^{i,\ell}$ with the resident $b^j_k$ such that
B-identifier $(j,k)$ is equal to $\Bij{i}{\ell}$. The preference list
of length 3 of resident $t_1^{i,\ell} = b^j_k$ in $I$ is obtained by
merging the preference list of $b^j_k$ in $V^j$ with the preference
list of $t_1^{i,\ell}$ in $C^i$, as follows: the top preference is the
lone acceptable hospital of $b^j_k$ in $V^j$; the second and third
preferences are the first and second preferences of $t_1^{i,\ell}$ in
$C^i$.

The above completes our description of the $\HRSC$ instance $I$
corresponding to $\SAT$ instance $\Prop$. We remark that the total
number of residents in $I$ is $3n \cdot 9 + 4n \cdot (21 - 3) = 99n$
and the total number of hospitals in $I$ is $3n \cdot 5 + 4n \cdot 16
= 79n$. The resident sizes and hospital capacities all belong to
$\{1,2\}$, and the longest preference list of any resident (resp.,
hospital) in $I$ is 3 (resp., 4).
\iffullversion
Figure~\ref{fig:global-construction}
gives an example of the transformation for a specific $\SAT$ instance
$\Prop$ with $n = 1$ (i.e., three variables and four clauses).
\fi

In the remainder of this section, we establish the correctness of the
polynomial-time transformation described above. That is, we prove that
$I$ admits a stable assignment if and only if $\Prop$ is satisfiable.

\subsubsection{If Direction}
\label{sec:if}

Assume that $\Prop$ is satisfiable, and fix a satisfying truth
assignment $\pi=(\pi_1, \ldots, \pi_{3n})$ for $\Prop$; in other
words, $\pi_j$ is the truth value of variable $j$ under $\pi$. We need
to prove that $\HRSC$ instance $I$ admits a stable assignment.

We now describe how to use $\pi$ to construct an assignment $\Match$
for $I$. After describing the construction, we prove that $\Match$ is
a valid assignment for $I$ (i.e., no resident or hospital is
over-subscribed), and then we prove that $\Match$ is stable for
$I$.

For each variable $j$ such that $\pi_j$ is true (resp., false), we
include the true (resp., false) assignment of $V^j$ in $\Match$. For
each clause $i$, we include in $\Match$ the critical assignment of
$C^i$ that, for each $\ell$ in $\{1,2,3\}$, assigns resident
$t_1^{i,\ell}$ to hospital $\gamma^i$ if and only if the literal with
A-identifier $(i,\ell)$ evaluates to false under truth assignment
$\pi$. This completes the definition of assignment $\Match$.

We now argue that $\Match$ is a valid assignment of $I$. It is easy to
see that for any $j$ in $\{1,\ldots,3n\}$, no hospital in $V^j$ is
over-subscribed and that no resident in $V^j$ outside the set
$\{b_1^j,\ldots,b_4^j\}$ is over-subscribed. Similarly, it is easy to
see that for any $i$ in $\{1,\ldots,4n\}$, no hospital in $C^i$ apart
from $\gamma^i$ is over-subscribed and no resident in $C^i$ outside
the set $\{t_1^{i,1}, t_1^{i,2}, t_1^{i,3}\}$ is over-subscribed. Thus
Lemmas~\ref{lem:if-gamma} and~\ref{lem:if-identified} below imply
$\Match$ is a valid assignment of $I$.

\begin{lemma}
  \label{lem:if-gamma}
  Let $i$ belong to $\{1,\ldots,4n\}$. Then hospital $\gamma^i$ is not
  over-subscribed.
\end{lemma}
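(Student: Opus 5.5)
Since $\gamma^i$ has capacity 2 and its only acceptable residents are $t_1^{i,1}$, $t_1^{i,2}$, and $t_1^{i,3}$, each of size 1, the plan is to show that $\Match$ assigns at most two of these three residents to $\gamma^i$. By construction, $\Match$ assigns $t_1^{i,\ell}$ to $\gamma^i$ exactly when the literal with A-identifier $(i,\ell)$ evaluates to false under $\pi$. So the total size of $\Res{\Match}{\gamma^i}$ equals the number of literals of clause $i$ that are false under $\pi$.

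Because $\pi$ is a satisfying truth assignment for $\Prop$, clause $i$ contains at least one literal that evaluates to true under $\pi$. Hence at most two of the three literals of clause $i$ are false. It follows that $\Size{\Res{\Match}{\gamma^i}} \le 2 = \Cap{\gamma^i}$.

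There remains one subtlety: the variable-gadget part of $\Match$ could also assign the identified resident $t_1^{i,\ell} = b_k^j$ to a hospital of $V^j$. That possibility affects the validity of residents, not of $\gamma^i$; it is presumably the content of Lemma~\ref{lem:if-identified}. For the present lemma, it suffices to note that every pair in $\Match$ involving $\gamma^i$ comes from the critical assignment of $C^i$. The only pairs of that assignment involving $\gamma^i$ are the pairs $(t_1^{i,\ell},\gamma^i)$ described above, because the critical assignment of each literal gadget does not involve $\gamma^i$.

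There is no real obstacle in this argument. The one thing to check carefully is that no other resident has $\gamma^i$ on its preference list, which can be read off Figure~\ref{fig:clause-gadget}.
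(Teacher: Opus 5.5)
Your proof is correct and follows the paper's argument: $\gamma^i$ has capacity 2, and its only acceptable residents are the three unit-size residents $t_1^{i,1},t_1^{i,2},t_1^{i,3}$. Since $\pi$ satisfies clause $i$, at most two of them are assigned to $\gamma^i$. Your extra check that no pair involving $\gamma^i$ comes from the variable-gadget part of $\Match$ is a harmless elaboration of something the paper leaves implicit.
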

\begin{deferproof}{Lemma~\ref{lem:if-gamma}}
  Since $\pi$ is a satisfying truth assignment, $\gamma^i$ has
  capacity 2, the set of acceptable residents of $\gamma^i$ is
  $\{t_1^{i,\ell}\mid \ell \in \{1,2,3\}\}$, and each resident in this
  set has size 1, we deduce that $\gamma^i$ is not over-subscribed.
\end{deferproof}

\begin{lemma}
  \label{lem:if-top-two}
  Let $i$ belong to $\{1,\ldots,4n\}$ and let $\ell$ belong to
  $\{1,2,3\}$. Resident $t_1^{i,\ell}$ is assigned under $\Match$ to its
  most (resp., second most) preferred hospital if and only if literal
  $\ell$ of clause $i$ evaluates to true (resp., false) under $\pi$.
\end{lemma}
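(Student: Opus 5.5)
Let $(j,k)$ denote $\Bij{i}{\ell}$, so that $t_1^{i,\ell} = b_k^j$. In $I$, the preference list of this resident is: first its lone acceptable hospital in $V^j$ (namely $\beta_1^j$ if $k\in\{1,2\}$ and $\beta_2^j$ if $k\in\{3,4\}$), second $\gamma^i$, third $\tau_1^{i,\ell}$. The plan is to identify exactly which pair in $\Match$ involves this resident, using how $\Match$ was built. There are two sources of such pairs: the assignment chosen for $V^j$ and the critical assignment chosen for $C^i$.

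First I would record what the clause part gives. The critical assignment of each literal gadget copy never uses $(t_1,\tau_1)$; it uses $(t_2,\tau_1)$ instead. So the only pair from $C^i$ that can involve $t_1^{i,\ell}$ is $(t_1^{i,\ell},\gamma^i)$. By construction, this pair is present exactly when literal $(i,\ell)$ is false under $\pi$.

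Next I would record what the variable part gives, by inspecting the true and false assignments of $V^j$.
\begin{itemize}
\item The true assignment contains $(b_1^j,\beta_1^j)$ and $(b_2^j,\beta_1^j)$, but neither $b_3^j$ nor $b_4^j$.
\item The false assignment contains $(b_3^j,\beta_2^j)$ and $(b_4^j,\beta_2^j)$, but neither $b_1^j$ nor $b_2^j$.
\end{itemize}
Since $k\in\{1,2\}$ means the occurrence is positive and $k\in\{3,4\}$ means it is negative, resident $b_k^j$ is assigned to its top hospital in $V^j$ exactly when the literal with B-identifier $(j,k)$ is true under $\pi$. Because $\BijFn$ maps A-identifier $(i,\ell)$ to this same literal occurrence, this is equivalent to literal $\ell$ of clause $i$ evaluating to true.

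Combining the two observations, $t_1^{i,\ell}$ gets its most preferred hospital if and only if the literal is true, and gets $\gamma^i$ (its second preference) if and only if the literal is false. As a side remark, exactly one of the two cases occurs, so the resident appears in exactly one pair; this also gives the degree constraint needed for validity.

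The only delicate step is the bookkeeping: confirming that no other component of $\Match$ (in particular the $\tau_1^{i,\ell}$ column or another gadget) contributes a pair involving this merged resident, and that the sign of occurrence $k$ is matched correctly with the true/false assignment of $V^j$. Both checks are direct inspections of the definitions.
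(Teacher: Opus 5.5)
Your proposal is correct and follows essentially the same route as the paper: fix $(j,k)=\Bij{i}{\ell}$, note that the true (resp.\ false) assignment of $V^j$ places $b_1^j,b_2^j$ (resp.\ $b_3^j,b_4^j$) at their top hospital and leaves the other two unassigned, and read off the rest from the definition of $\Match$ on $C^i$. You additionally spell out the clause-side bookkeeping, namely that critical assignments never use $(t_1,\tau_1)$, so $(t_1^{i,\ell},\gamma^i)$ is the only possible pair from $C^i$; the paper leaves this implicit here and only invokes it in the subsequent validity lemma.
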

\begin{deferproof}{Lemma~\ref{lem:if-top-two}}
  Let $(j,k)$ denote the B-identifier $\Bij{i}{\ell}$. Observe that
  the true (resp., false) assignment for variable gadget $j$ assigns
  (resp., does not assign) residents $b^j_1$ and $b^j_2$ to
  $\beta^j_1$, their most preferred hospital in $I$, and does not
  assign (resp., assigns) residents $b^j_3$ and $b^j_4$ to
  $\beta^j_2$, their most preferred hospital in $I$. The claim of the
  lemma follows straightforwardly from the definition of $\Match$.
\end{deferproof}

\begin{lemma}
  \label{lem:if-identified}
 Let $i$ belong to $\{1,\ldots,4n\}$, let $\ell$ belong to
 $\{1,2,3\}$, and let $(j,k)$ denote the B-identifier
 $\Bij{i}{\ell}$. Then resident $t_1^{i,\ell} = b^j_k$ is not
 over-subscribed.
\end{lemma}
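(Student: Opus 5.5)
The plan is to show that resident $t_1^{i,\ell} = b^j_k$ appears in at most one pair of $\Match$. Its three acceptable hospitals in $I$ are, in order of preference, the lone acceptable hospital of $b^j_k$ in $V^j$ (namely $\beta_1^j$ if $k \in \{1,2\}$ and $\beta_2^j$ if $k \in \{3,4\}$), then $\gamma^i$, then $\tau_1^{i,\ell}$. Pairs involving this resident enter $\Match$ from exactly two sources: the variable-gadget assignment chosen for $V^j$, and the critical assignment chosen for $C^i$. The task is to show that these two sources never both assign the resident, and that the clause part never assigns it twice.

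First, the clause part. By the definition of the critical assignment of the literal gadget, the critical assignment of $I^\ell$ does not involve $t_1$, since it contains $(t_2,\tau_1)$ and $\tau_1$ has capacity 1. So in the chosen critical assignment of $C^i$, the only possible pair involving $t_1^{i,\ell}$ is $(t_1^{i,\ell},\gamma^i)$. By construction, this pair is included exactly when literal $(i,\ell)$ evaluates to false under $\pi$. Hence the clause part contributes at most one pair.

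Second, the variable part. The true assignment of $V^j$ contains $(b^j_1,\beta^j_1)$ and $(b^j_2,\beta^j_1)$ and no pair involving $b^j_3$ or $b^j_4$. The false assignment contains $(b^j_3,\beta^j_2)$ and $(b^j_4,\beta^j_2)$ and no pair involving $b^j_1$ or $b^j_2$. No other pairs in either assignment involve the $b$ residents. So $b^j_k$ is assigned within $V^j$ exactly when $k\in\{1,2\}$ and $\pi_j$ is true, or $k\in\{3,4\}$ and $\pi_j$ is false. This is exactly the condition that literal $(j,k)$, i.e.\ literal $(i,\ell)$, evaluates to true under $\pi$ (this is the content of Lemma~\ref{lem:if-top-two}).

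Combining the two parts, the resident receives its $V^j$ hospital precisely when the literal is true, and $\gamma^i$ precisely when it is false. These conditions are mutually exclusive, so the resident is in exactly one pair of $\Match$, and in particular is not over-subscribed. The only point needing care is confirming that the critical assignment of the literal gadget never uses $(t_1,\tau_1)$; this follows directly from its definition as the union of the critical assignment of the unstable gadget with $\{(s_3,\tau_2),(t_2,\tau_1)\}$.
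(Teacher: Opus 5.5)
Your proof is correct and follows essentially the same route as the paper. The paper also uses the criticality of $\Match$ on the hospitals of $C^i$ to rule out $(t_1^{i,\ell}, \tau_1^{i,\ell})$, and then invokes Lemma~\ref{lem:if-top-two} to see that the resident's top two choices are assigned under mutually exclusive conditions (literal true versus literal false); you simply unpack the content of that lemma explicitly rather than citing it.
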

\begin{deferproof}{Lemma~\ref{lem:if-identified}}
  Since $\Proj{\Match}{\Hosps{C^i}}$ is a critical assignment of $C^i$,
  resident $t_1^{i,\ell}$ is not assigned to its third preference in
  $I$ (i.e., $\tau_1^{i,\ell}$) under $\Match$. By
  Lemma~\ref{lem:if-top-two}, we conclude that resident $t_1^{i,\ell}$
  is not over-subscribed.
\end{deferproof}

It remains to prove that $\Match$ is stable for $I$.

\begin{lemma}
  \label{lem:if-clause-global}
  Let $i$ belong to $\{1,\dots,4n\}$ and let $H$ denote $\Hosps{C^i}$.
  Then $\Match$ is an $H$-stable assignment of $I$.
\end{lemma}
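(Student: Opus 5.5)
We split $H=\Hosps{C^i}$ into $\{\gamma^i\}$ and the three literal-gadget hospital sets $H^{i,\ell}=\Hosps{C^{i,\ell}}$ for $\ell\in\{1,2,3\}$, where $C^{i,\ell}$ is literal gadget $\ell$ of $C^i$. We then show separately that no blocking coalition in $\Block{I}{\Match}$ involves $\gamma^i$, and that none involves a hospital in $H^{i,\ell}$.

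\textbf{Literal-gadget hospitals.} Let $B=\Block{I}{\Match}$, and let $\Match^i=\Proj{\Match}{H}$, which by construction is a critical assignment of $C^i$.
\begin{enumerate}
\item View $\Proj{I}{H}$ as the isolated clause gadget; it coincides with $C^i$ because $t_1^{i,\ell}$ keeps exactly its $C^i$ hospitals.
\item Lemma~\ref{lem:outward} gives $\Proj{B}{H}\subseteq \Block{C^i}{\Match^i}$.
\item Applying Lemma~\ref{lem:outward} again from $C^i$ down to $H^{i,\ell}$, together with Lemma~\ref{lem:if-clause}, the only possible blocking coalition at a hospital of $H^{i,\ell}$ is $(\{t_1^{i,\ell}\},\tau_1^{i,\ell})$.
\item This coalition does not lie in $B$. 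By Lemma~\ref{lem:if-top-two}, $t_1^{i,\ell}$ is assigned under $\Match$ to its first or second preference in $I$. Its third preference is $\tau_1^{i,\ell}$, so condition~(1) fails.
\end{enumerate}
A small point needs care here: the reduction to $\Proj{B}{H^{i,\ell}}$ should be done directly from $I$ via Lemma~\ref{lem:outward}. Then the blocking coalitions at $H^{i,\ell}$ in $I$ lie in $\Block{C^{i,\ell}}{\Proj{\Match}{H^{i,\ell}}}$. That set equals $\{(t_1^{i,\ell},\tau_1^{i,\ell})\}$ by Lemma~\ref{lem:literal}, since the projection is the critical assignment.

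\textbf{Hospital $\gamma^i$.} Its acceptable residents are $t_1^{i,1},t_1^{i,2},t_1^{i,3}$, each of size $1$, and its capacity is $2$. Suppose $(R,\gamma^i)\in B$.
\begin{enumerate}
\item Condition~(1) forces every $r\in R$ to be unassigned or to prefer $\gamma^i$ to its assignment. By Lemma~\ref{lem:if-top-two}, each $t_1^{i,\ell}$ is assigned to its first choice ($\beta$ in $V^j$) when its literal is true, and to $\gamma^i$ (its second choice) when false. So no such $r$ exists, and $R=\emptyset$.
\item With $R$ empty, condition~(2) would require a subset $R'$ with $\Size{R'}\le 0$ and $\Size{\Res{\Match}{\gamma^i}}-2\le \Size{R'}$.
\item The definition presumably intends $R$ to be nonempty. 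Even if not, the inequality $\Size{\Res{\Match}{\gamma^i}}\le 2$ holds by Lemma~\ref{lem:if-gamma}, so an empty coalition would block every assignment trivially. We take the standard convention that coalitions are nonempty, which settles this case.
\end{enumerate}

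Combining the two parts shows $\Match$ is $H$-stable. The main obstacle is the bookkeeping in the first part: verifying that the only candidate coalition in the literal gadgets, $(t_1^{i,\ell},\tau_1^{i,\ell})$, is ruled out by $t_1^{i,\ell}$'s better external assignment.
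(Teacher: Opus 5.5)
Your proof is correct and follows essentially the same route as the paper. Both use Lemma~\ref{lem:outward} together with the critical structure of $\Proj{\Match}{\Hosps{C^i}}$ to reduce to coalitions at $\gamma^i$ and the pairs $(t_1^{i,\ell},\tau_1^{i,\ell})$; the paper phrases this via the ``nice'' characterization of Lemma~\ref{lem:clause}, you via Lemmas~\ref{lem:if-clause} and~\ref{lem:literal}, and both then rule out all remaining candidates with Lemma~\ref{lem:if-top-two} (your nonempty-$R$ convention is likewise implicit in the paper's Claim~1).
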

\begin{deferproof}{Lemma~\ref{lem:if-clause-global}}
  Let $B$ denote $\Block{I}{\Match}$. The construction of $\Match$
  implies $\Match' = \Proj{\Match}{H}$ is a critical assignment of
  $C^i$. Thus Lemma~\ref{lem:clause} implies $\Match'$ is a nice
  assignment of $C^i$. Hence Lemma~\ref{lem:outward} implies that for
  every blocking coalition $(R,h)$ in $B$ such that $h$ belongs to
  $\Hosps{C^i}$, either $h = \gamma^i$ or $(R,h)$ belongs to
  $$
  \{(t_1^{i,1}, \tau_1^{i,1}), (t_1^{i,2}, \tau_1^{i,2}), (t_1^{i,3},
  \tau_1^{i,3})\}.
  $$ Thus Claims~1 and~2 below imply $\Match$ is an $H$-stable
  assignment of $I$, as required.

  Claim~1: The hospital $\gamma^i$ is not involved in any blocking
  coalition in $B$. The only residents who could participate in a
  blocking coalition in $B$ with $\gamma^i$ are $t_1^{i,1}$,
  $t_1^{i,2}$, and $t_1^{i,3}$. Lemma~\ref{lem:if-top-two} implies
  that for any $\ell$ in $\{1,2,3\}$, resident $t_1^{i,\ell}$ is
  assigned to one of its two most preferred hospitals. Since
  $\gamma^i$ is the second preference of $t_1^{i,\ell}$, we deduce
  that $t_1^{i,\ell}$ does not participate in a blocking coalition in
  $B$ with $\gamma^i$. This completes the proof of Claim~1.

  Claim~2: For any $\ell$ in $\{1,2,3\}$, hospital $\tau_1^{i,\ell}$
  is not involved in any blocking coalition in $B$. Let $\ell$ belong
  to $\{1,2,3\}$. Size and capacity considerations imply that the only
  possible blocking coalitions in $B$ involving $\tau_1^{i,\ell}$ are
  $(t_1^{i,\ell}, \tau_1^{i,\ell})$ and $(t_2^{i,\ell},
  \tau_1^{i,\ell})$. Lemma~\ref{lem:if-top-two} implies resident
  $t_1^{i,\ell}$ is assigned to one of its two most preferred
  hospitals. Since $\tau_1^{i,\ell}$ is its third choice,
  $(t_1^{i,\ell}, \tau_1^{i,\ell})$ does not belong to $B$. Since
  $\Match$ assigns $t_2^{i,\ell}$ to $\tau_1^{i,\ell}$,
  $(t_2^{i,\ell}, \tau_1^{i,\ell})$ does not belong to $B$. This
  completes the proof of Claim~2.
\end{deferproof}

\begin{lemma}
  \label{lem:if-variable}
  Let $j$ belong to $\{1,\ldots,3n\}$ and let $H$ denote
  $\Hosps{V^j}$. Then $\Match$ is an $H$-stable assignment of $I$.
\end{lemma}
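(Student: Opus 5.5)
The plan mirrors the first half of the proof of Lemma~\ref{lem:variable}, now carried out inside the global instance $I$. Let $B$ denote $\Block{I}{\Match}$ and $\Match' = \Proj{\Match}{H}$. By construction, $\Match'$ is the true or false assignment of $V^j$, according to $\pi_j$. By Lemma~\ref{lem:variable}, $\Match'$ is therefore stable for the isolated gadget $V^j$, so $\Block{V^j}{\Match'} = \emptyset$.

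The difficulty is that $V^j$ is not literally $\Proj{I}{H}$. In $I$, the residents $b^j_k$ have two additional hospitals outside $H$, namely $\gamma^i$ and $\tau_1^{i,\ell}$. Pruning these hospitals from the preference lists does recover exactly $V^j$, with the same preferences restricted to $H$, so in fact $\Proj{I}{H} = V^j$. Lemma~\ref{lem:outward} then gives
\[
\Proj{B}{H} \subseteq \Block{\Proj{I}{H}}{\Proj{\Match}{H}} = \Block{V^j}{\Match'} = \emptyset,
\]
and this is exactly $H$-stability.

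The only step needing care is the check that $\Proj{I}{H}$ coincides with $V^j$. I would verify three things:
\begin{itemize}
\item The residents having an acceptable hospital in $H$ are exactly the nine residents of $V^j$. No clause resident other than the identified $t_1^{i,\ell} = b^j_k$ lists a hospital of $V^j$, and no other variable gadget shares hospitals with $V^j$.
\item Hospital preference lists in $H$ are unchanged from the isolated gadget, as stipulated in the construction.
\item Each $b^j_k$, after pruning, has the lone acceptable hospital it had in $V^j$, namely its top choice in $I$.
\end{itemize}
Sizes and capacities are inherited unchanged.

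I do not expect a real obstacle here; the argument is a direct application of Lemmas~\ref{lem:variable} and~\ref{lem:outward}. If one prefers to avoid the identification $\Proj{I}{H} = V^j$, a direct alternative is available. Every $b^j_k$ ranks its $V^j$ hospital first, so a blocking coalition at a hospital of $H$ satisfies condition~(1) in $I$ if and only if it does in $V^j$. The reason is that the relevant comparisons, between an $H$-hospital and the assigned hospital, either occur within $H$ or involve an outside hospital that ranks below every $H$-hospital. Condition~(2) depends only on $\Res{\Match}{h}$ and the preferences of $h$, both of which are identical in $I$ and $V^j$.
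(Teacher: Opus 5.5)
Your proposal is correct and matches the paper's proof: the paper also observes that $\Proj{\Match}{H}$ is the true or false assignment of $V^j$, uses Lemma~\ref{lem:variable} to get $\Block{V^j}{\Proj{\Match}{H}} = \emptyset$, and concludes by Lemma~\ref{lem:outward}. Your explicit check that $\Proj{I}{H}$ coincides with $V^j$ is a step the paper leaves implicit.
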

\begin{deferproof}{Lemma~\ref{lem:if-variable}}
  Let $B$ denote $\Block{I}{\Match}$, let $B'$ denote $\Proj{B}{H}$,
  let $\Match'$ denote $\Proj{\Match}{H}$, and let $B''$ denote
  $\Block{V^j}{\Match'}$. By the definition of $\Match$, $\Match'$ is
  the true or the false assignment for $V^j$. Thus
  Lemma~\ref{lem:variable} implies $\Match'$ is stable for $V^j$ and
  hence $B'' = \emptyset$. Hence Lemma~\ref{lem:outward} implies $B' =
  \emptyset$; the claim of the lemma follows.
\end{deferproof}

\begin{lemma}
  \label{lem:if-stable}
  Assignment $\Match$ is stable for $I$.
\end{lemma}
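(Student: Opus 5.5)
The plan is to show that no hospital of $I$ is involved in a blocking coalition for $\Match$. The first step is to observe that the hospitals of $I$ are partitioned into the sets $\Hosps{V^j}$ for $j$ in $\{1,\ldots,3n\}$ and $\Hosps{C^i}$ for $i$ in $\{1,\ldots,4n\}$. This holds by construction: every hospital of $I$ belongs to exactly one variable gadget copy or exactly one clause gadget copy. Only residents are shared between gadgets, via the identification $t_1^{i,\ell} = b^j_k$, and no hospitals are shared.

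Every blocking coalition $(R,h)$ in $\Block{I}{\Match}$ involves some hospital $h$, and that hospital lies in one of these sets. Suppose $h$ lies in $\Hosps{C^i}$. Then Lemma~\ref{lem:if-clause-global} says $\Match$ is $\Hosps{C^i}$-stable, so $(R,h)$ cannot exist. Suppose instead $h$ lies in $\Hosps{V^j}$. Then Lemma~\ref{lem:if-variable} says $\Match$ is $\Hosps{V^j}$-stable, again a contradiction. Hence $\Block{I}{\Match} = \emptyset$, and $\Match$ is stable for $I$.

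The substantive work has already been done in the cited lemmas, including the handling of the shared residents in Lemma~\ref{lem:if-clause-global}, which uses Lemma~\ref{lem:if-top-two}. The remaining obstacle is purely bookkeeping: confirming that the hospital sets of all gadget copies cover $\Hosps{I}$. That is immediate from the description of the transformation.
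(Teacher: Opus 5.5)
Your proposal is correct and matches the paper's proof, which likewise concludes stability directly from Lemma~\ref{lem:if-clause-global} holding for every clause copy and Lemma~\ref{lem:if-variable} holding for every variable copy. Your explicit remark that the hospital sets of the gadget copies cover $\Hosps{I}$ simply spells out a step the paper leaves implicit.
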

\begin{deferproof}{Lemma~\ref{lem:if-stable}}
  Since Lemma~\ref{lem:if-clause-global} holds for all $i$ in
  $\{1,\dots,4n\}$ and Lemma~\ref{lem:if-variable} holds for all $j$
  in $\{1,\dots,3n\}$, we conclude that $\Match$ is a stable
  assignment for $I$.
\end{deferproof}

\subsubsection{Only If Direction}
\label{sec:only-if}

Assume that $I$ admits a stable assignment, and fix a stable
assignment $\Match$ of $I$. We need to prove that $\Prop$ is
satisfiable.

\begin{lemma}
  \label{lem:only-if-variable}
  Let $j$ belong to $\{1,\ldots,3n\}$, let $H$ denote $\Hosps{V^j}$,
  and let $\Match'$ denote $\Proj{\Match}{H}$. Then $\Match'$ is
  the true or the false assignment for $V^j$.
\end{lemma}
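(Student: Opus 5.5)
The plan is to reduce to Lemma~\ref{lem:variable} via Lemma~\ref{lem:inward}, exactly as in the proof of Lemma~\ref{lem:variable} itself. Let $B$ denote $\Block{I}{\Match}$, which is empty since $\Match$ is stable, and let $B''$ denote $\Block{V^j}{\Match'}$. Here $V^j$ is $\Proj{I}{H}$: the residents of $I$ with an acceptable hospital in $H$ are exactly the nine residents of $V^j$, and after pruning, each $b^j_k$ has only its lone hospital in $V^j$.

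Lemma~\ref{lem:inward} gives $B'' \subseteq \Proj{B}{H} \cup \Bdry{I}{H}{\Match} = \Bdry{I}{H}{\Match}$. So the key step is to show that $\Bdry{I}{H}{\Match} = \emptyset$. Take a coalition $(R,h)$ in this set. Some resident $r \in R$ must prefer a hospital outside $H$ to $h$. The only residents of $V^j$ with acceptable hospitals outside $H$ are $b^j_1,\ldots,b^j_4$. For each such $b^j_k$, its lone hospital in $V^j$ is its \emph{most} preferred hospital in $I$. Hence it prefers no hospital outside $H$ to $h$. Therefore no $r \in R$ qualifies, and the boundary set is empty.

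It follows that $B'' = \emptyset$, i.e., $\Match'$ is stable for $V^j$. Lemma~\ref{lem:variable} then says $\Match'$ is the true or the false assignment for $V^j$.

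The only delicate point is the identification $\Proj{I}{H} = V^j$. The projection keeps residents $b^j_k = t_1^{i,\ell}$ with their preference lists pruned to the single hospital $\beta^j_1$ or $\beta^j_2$, and sizes and capacities are unchanged. I expect this check to be routine. The substantive observation is that the top-preference placement of the variable-gadget hospital in the merged lists kills all boundary coalitions.
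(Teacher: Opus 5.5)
Your proposal is correct and is essentially the paper's proof: Lemma~\ref{lem:inward} together with $\Bdry{I}{H}{\Match} = \emptyset$ gives that $\Proj{\Match}{H}$ is stable for $V^j$, and Lemma~\ref{lem:variable} then finishes the argument. The paper simply asserts that the boundary set is empty; your justification fills in that step, namely that each $b^j_k$ ranks its lone hospital in $V^j$ first.
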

\begin{deferproof}{Lemma~\ref{lem:only-if-variable}}
  Since $\Bdry{I}{H}{\Match} = \emptyset$, Lemma~\ref{lem:inward}
  implies $\Block{V^j}{\Match'}$ is equal to
  $\Proj{\Block{I}{\Match}}{H} = \Proj{\emptyset}{H} =
  \emptyset$. Thus $\Match'$ is a stable assignment for $V^j$ and the
  claim of the lemma follows from Lemma~\ref{lem:variable}.
\end{deferproof}

Let $\pi = (\pi_1, \ldots, \pi_{3n})$ be the truth assignment for
$\Prop$ such that $\pi_j$ is true if and only if
$\Proj{\Match}{\Hosps{V^j}}$ is the true assignment for $V^j$. Below
we complete the ``only if'' proof by showing that $\pi$ is a
satisfying truth assignment for $\Prop$.

\begin{lemma}
  \label{lem:only-if-true}
  Let $i$ belong to $\{1,\ldots,4n\}$ and let $\ell$ belong to
  $\{1,2,3\}$. Then the literal of $\Prop$ with A-identifier
  $(i,\ell)$ evaluates to true under $\pi$ if and only if $\Match$
  assigns resident $t_1^{i,\ell}$ to its most preferred hospital.
\end{lemma}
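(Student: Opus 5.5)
Let $(j,k)$ denote the B-identifier $\Bij{i}{\ell}$, so that $t_1^{i,\ell} = b^j_k$. The most preferred hospital of this resident in $I$ is $\beta^j_1$ if $k \in \{1,2\}$ and $\beta^j_2$ if $k \in \{3,4\}$. The plan is to read off whether $b^j_k$ is assigned to this top hospital directly from the form of $\Proj{\Match}{\Hosps{V^j}}$. Lemma~\ref{lem:only-if-variable} tells us that this restriction is either the true or the false assignment of $V^j$. Since $\Proj{\Match}{\Hosps{V^j}}$ is exactly the set of pairs of $\Match$ involving hospitals of $V^j$, the pair $(b^j_k, \text{top hospital})$ belongs to $\Match$ if and only if it belongs to the true or false assignment of $V^j$, whichever one $\Proj{\Match}{\Hosps{V^j}}$ equals.

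We then split into two cases according to $\pi_j$.

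If $\pi_j$ is true, then $\Proj{\Match}{\Hosps{V^j}}$ is the true assignment of $V^j$. This assignment contains $(b_1^j,\beta_1^j)$ and $(b_2^j,\beta_1^j)$, and it contains no pair $(b_3^j,\beta_2^j)$ or $(b_4^j,\beta_2^j)$, because it contains $(a_4^j,\beta_2^j)$ and $\beta_2^j$ has capacity~2. Hence $b^j_k$ gets its top hospital if and only if $k\in\{1,2\}$, that is, if and only if the literal is a positive occurrence of variable $j$. Since $\pi_j$ is true, that is exactly when the literal evaluates to true.

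If $\pi_j$ is false, the false assignment contains $(b_3^j,\beta_2^j)$ and $(b_4^j,\beta_2^j)$ but not $(b_1^j,\beta_1^j)$ or $(b_2^j,\beta_1^j)$, since it contains $(a_2^j,\beta_1^j)$. So $b^j_k$ gets its top hospital if and only if $k\in\{3,4\}$, that is, if and only if the literal is negative. Since $\pi_j$ is false, that is again exactly when the literal evaluates to true.

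This is essentially the same observation used in the proof of Lemma~\ref{lem:if-top-two}. The only step requiring care is the bookkeeping: the identification $t_1^{i,\ell}=b^j_k$ has to be matched to the positive/negative convention for B-identifiers, and we need that membership of a pair involving a hospital of $V^j$ in $\Match$ is determined by the restriction. I do not expect any real obstacle.
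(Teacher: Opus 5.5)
Your proposal is correct and follows essentially the same route as the paper: it uses Lemma~\ref{lem:only-if-variable} to identify the restriction of $\Match$ to $\Hosps{V^j}$ as the true or false assignment, and then reads off whether $b^j_k$ is assigned to $\beta^j_{\lceil k/2\rceil}$. The only cosmetic difference is that you case-split on $\pi_j$, whereas the paper case-splits on $k\in\{1,2\}$ versus $k\in\{3,4\}$.
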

\begin{deferproof}{Lemma~\ref{lem:only-if-true}}
  Let $(j,k)$ denote the B-identifier $\Bij{i}{\ell}$, let $H$ denote
  $\Hosps{V^j}$, and let $\Match'$ denote $\Proj{\Match}{H}$. The most
  preferred hospital of resident $t_1^{i,\ell} = b^j_k$ is
  $\beta^j_{\lceil k/2 \rceil}$. By Lemma~\ref{lem:only-if-variable},
  $\Match'$ is the true or the false assignment for $V^j$. If $k$
  belongs to $\{1,2\}$, then resident $t_1^{i,\ell}$ is (resp., is
  not) assigned to its most preferred hospital under the true (resp.,
  false) assignment for $V^j$. If $k$ belongs to $\{3,4\}$, then
  resident $t_1^{i,\ell}$ is (resp., is not) assigned to its most
  preferred hospital under the false (resp., true) assignment for
  $V^j$. The claim of the lemma follows.
\end{deferproof}

\begin{lemma}
  \label{lem:only-if-top-two}
  Let $i$ belong to $\{1,\ldots,4n\}$ and let $\ell$ belong to
  $\{1,2,3\}$. Then $\Match$ assigns resident $t_1^{i,\ell}$ to one of
  its two most preferred hospitals.
\end{lemma}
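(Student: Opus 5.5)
We argue by contradiction. Suppose $\Match$ does not assign $t_1^{i,\ell}$ to either of its two most preferred hospitals. Then $t_1^{i,\ell}$ is either unassigned or assigned to its third preference $\tau_1^{i,\ell}$. We will show that this forces a blocking coalition involving $\tau_1^{i,\ell}$ (or some other hospital of the literal gadget), contradicting the stability of $\Match$.

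The plan is to restrict attention to the clause gadget $C^i$ with hospital set $H = \Hosps{C^i}$ and to study $\Match' = \Proj{\Match}{H}$.

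\textbf{Step 1: $\Match'$ is nice, hence critical.} Any blocking coalition $(R,h)$ in $\Block{C^i}{\Match'}$ with $h \neq \gamma^i$ and $(R,h) \notin \{(t_1^{i,m},\tau_1^{i,m})\}$ must be checked against $\Bdry{I}{H}{\Match}$.
\begin{itemize}
\item The only residents of $C^i$ with acceptable hospitals outside $H$ are $t_1^{i,1},t_1^{i,2},t_1^{i,3}$.
\item Each such resident prefers an outside hospital only to $\gamma^i$ and $\tau_1^{i,m}$.
\item So every element of $\Bdry{I}{H}{\Match}$ involves $\gamma^i$, or is some $(R,\tau_1^{i,m})$ with $t_1^{i,m}\in R$.
\item Since $\tau_1^{i,m}$ has capacity 1 and $t_1^{i,m}$ has size 1, such an $R$ must equal $\{t_1^{i,m}\}$.
\end{itemize}
Hence Lemma~\ref{lem:inward}, together with $\Block{I}{\Match}=\emptyset$, shows that every blocking coalition of $\Match'$ in $C^i$ either involves $\gamma^i$ or lies in $\{(t_1^{i,m},\tau_1^{i,m})\}$. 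That is, $\Match'$ is nice, and by Lemma~\ref{lem:clause} it is critical. In particular, the projection onto literal gadget $\ell$ is its critical assignment, which contains $(t_2^{i,\ell},\tau_1^{i,\ell})$. So $t_1^{i,\ell}$ is not assigned to $\tau_1^{i,\ell}$, and by our assumption it is therefore unassigned in $I$.

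\textbf{Step 2: derive the contradiction.} The pair $(t_1^{i,\ell},\tau_1^{i,\ell})$ is a blocking coalition in $I$:
\begin{itemize}
\item $t_1^{i,\ell}$ is unassigned;
\item $\tau_1^{i,\ell}$ holds $t_2^{i,\ell}$, whom it ranks below $t_1^{i,\ell}$;
\item sizes are 1 and the capacity is 1, so condition~(2) of the definition holds with $R'=\{t_2^{i,\ell}\}$.
\end{itemize}
This contradicts $\Block{I}{\Match}=\emptyset$.

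The main obstacle is the boundary bookkeeping in Step 1. We must verify that blocking coalitions of $\Match'$ which are excused only because some $t_1^{i,m}$ is assigned outside $C^i$ are exactly those involving $\gamma^i$ or the pairs $(t_1^{i,m},\tau_1^{i,m})$. We also must confirm that Lemma~\ref{lem:inward} still applies even though $t_1^{i,m}$ may be assigned to an outside hospital, which makes it unassigned in $\Match'$.
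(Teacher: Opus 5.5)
Your proposal is correct and follows the paper's proof essentially step for step. Lemma~\ref{lem:inward}, together with the boundary analysis, shows that $\Proj{\Match}{\Hosps{C^i}}$ is nice, and hence critical by Lemma~\ref{lem:clause}; this rules out $\tau_1^{i,\ell}$, and an unassigned $t_1^{i,\ell}$ would then form the blocking coalition $(t_1^{i,\ell},\tau_1^{i,\ell})$. Your size/capacity argument that a boundary coalition at $\tau_1^{i,m}$ must be $(\{t_1^{i,m}\},\tau_1^{i,m})$ supplies a detail the paper leaves implicit. Your concern about applying Lemma~\ref{lem:inward} is unnecessary, since that lemma holds for arbitrary $\Match$ and $H$.
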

\begin{deferproof}{Lemma~\ref{lem:only-if-top-two}}
  Let $B$ denote $\Block{I}{\Match}$, let $H$ denote $\Hosps{C^i}$,
  let $B'$ denote $\Proj{B}{H}$, let $I'$ denote $\Proj{I}{H}$, let
  $\Match'$ denote $\Proj{\Match}{H}$, and let $B''$ denote
  $\Block{I'}{\Match'}$. Since $B = B' = \emptyset$ and every blocking
  coalition in $\Bdry{I}{H}{\Match}$ either involves hospital
  $\gamma^i$ or is contained in $\{(t_1^{i,1}, \tau_1^{i,1}),
  (t_1^{i,2}, \tau_1^{i,2}), (t_1^{i,3}, \tau_1^{i,3})\}$,
  Lemma~\ref{lem:inward} implies $\Match'$ is a nice assignment for
  $I'$. Thus Lemma~\ref{lem:clause} implies $\Match'$ is a critical
  assignment for $I'$. It follows that $\Match$ does not assign
  resident $t_1^{i,\ell}$ to its least preferred hospital
  $\tau_1^{i,\ell}$ (i.e., its third choice).

  It remains to argue that $t_1^{i,\ell}$ cannot be unassigned in
  $\Match$. Suppose $t_1^{i,\ell}$ is unassigned in $\Match$. Then
  $(t_1^{i,\ell}, \tau_1^{i,\ell})$ is a blocking coalition in $B$, a
  contradiction.
\end{deferproof}

\begin{lemma}
  \label{lem:only-if-false}
  Let $i$ belong to $\{1,\ldots,4n\}$ and let $\ell$ belong to
  $\{1,2,3\}$. Then the literal of $\Prop$ with A-identifier
  $(i,\ell)$ evaluates to false under truth assignment $\pi$ if and
  only if $\Match$ assigns $t_1^{i,\ell}$ to $\gamma^i$.
\end{lemma}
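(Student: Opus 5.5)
The plan is to combine Lemma~\ref{lem:only-if-true} with Lemma~\ref{lem:only-if-top-two}. Let $(j,k)$ denote the B-identifier $\Bij{i}{\ell}$. First recall the preference list of resident $t_1^{i,\ell} = b^j_k$ in $I$, as fixed in the transformation. Its most preferred hospital is $\beta^j_{\lceil k/2\rceil}$ in $V^j$, its second preference is $\gamma^i$, and its third preference is $\tau_1^{i,\ell}$. These are the first and second preferences of $t_1^{\ell}$ within the clause gadget of Figure~\ref{fig:clause-gadget}. In particular the top two hospitals of $t_1^{i,\ell}$ are distinct, and the second one is $\gamma^i$.

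For the ``if'' direction, suppose $\Match$ assigns $t_1^{i,\ell}$ to $\gamma^i$. By the resident degree constraint, $t_1^{i,\ell}$ is then not assigned to its most preferred hospital. Lemma~\ref{lem:only-if-true} then says that the literal with A-identifier $(i,\ell)$ does not evaluate to true under $\pi$, so it evaluates to false.

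For the ``only if'' direction, suppose the literal evaluates to false under $\pi$. Lemma~\ref{lem:only-if-true} gives that $\Match$ does not assign $t_1^{i,\ell}$ to its most preferred hospital. Lemma~\ref{lem:only-if-top-two} gives that $\Match$ assigns $t_1^{i,\ell}$ to one of its two most preferred hospitals. So $t_1^{i,\ell}$ must be assigned to its second preference, which is $\gamma^i$.

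I do not expect a real obstacle here. The only point needing care is confirming that the second preference of $t_1^{i,\ell}$ in $I$ is exactly $\gamma^i$. This follows from the clause gadget, where each $t_1^{\ell}$ ranks $\gamma$ first and $\tau_1^{\ell}$ second, together with the merging rule used in the transformation.
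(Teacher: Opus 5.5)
Your proposal is correct and follows the paper's own proof: it combines Lemma~\ref{lem:only-if-true} with Lemma~\ref{lem:only-if-top-two}, using the fact that $\gamma^i$ is the second most preferred hospital of $t_1^{i,\ell}$. You spell out both directions explicitly and check the preference order from the merging rule, which the paper leaves implicit.
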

\begin{deferproof}{Lemma~\ref{lem:only-if-false}}
  Immediate from Lemmas~\ref{lem:only-if-true}
  and~\ref{lem:only-if-top-two}, since hospital $\gamma^i$ is the
  second most preferred hospital of resident $t_1^{i,\ell}$.
\end{deferproof}

\begin{lemma}
  \label{lem:only-if-main}
  Truth assignment $\pi$ satisfies $\Prop$.
\end{lemma}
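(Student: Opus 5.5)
The plan is to argue by contradiction, clause by clause. Fix a clause $i$ in $\{1,\ldots,4n\}$ and suppose that all three literals of clause $i$ (A-identifiers $(i,1)$, $(i,2)$, $(i,3)$) evaluate to false under $\pi$. It then suffices to show that this contradicts either the validity of $\Match$ or its stability.

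By Lemma~\ref{lem:only-if-false}, if every literal of clause $i$ is false under $\pi$, then $\Match$ assigns each of $t_1^{i,1}$, $t_1^{i,2}$, and $t_1^{i,3}$ to $\gamma^i$. These three residents each have size 1, so the total size of $\Res{\Match}{\gamma^i}$ is $3$. However, $\gamma^i$ has capacity $2$, as noted in the proof of Lemma~\ref{lem:if-gamma}. This violates the hospital capacity constraint, contradicting that $\Match$ is an assignment of $I$. Hence at least one literal of clause $i$ evaluates to true under $\pi$. Since $i$ was arbitrary, every clause is satisfied, and $\pi$ satisfies $\Prop$.

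There is essentially no obstacle here; the work was already done in Lemmas~\ref{lem:only-if-true}, \ref{lem:only-if-top-two}, and~\ref{lem:only-if-false}. The only point to check is that the capacity of $\gamma^i$ is $2$ (boxed in Figure~\ref{fig:clause-gadget}) and that each $t_1^{i,\ell}$ has size $1$. Both facts are read directly off the gadget definitions.
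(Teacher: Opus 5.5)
Your proposal is correct and matches the paper's proof: assuming clause $i$ is unsatisfied, Lemma~\ref{lem:only-if-false} puts all three unit-size residents $t_1^{i,1}, t_1^{i,2}, t_1^{i,3}$ at $\gamma^i$, which exceeds its capacity of $2$. This contradicts the validity of $\Match$.
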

\begin{deferproof}{Lemma~\ref{lem:only-if-main}}
  Let $i$ belong to $\{1,\ldots,4n\}$. We need to prove that
  truth assignment $\pi$ satisfies clause $i$ of $\Prop$. Assume for
  the sake of contradiction that $\pi$ does not satisfy clause $i$. By
  Lemma~\ref{lem:only-if-false}, $\Match$ assigns the three unit-size
  residents $t_1^{i,1}$, $t_1^{i,2}$, and $t_1^{i,3}$ to
  $\gamma^i$. Since the capacity of $\gamma^i$ is 2, we conclude that
  $\Match$ is not a valid assignment for $I$, a contradiction.
\end{deferproof}

\iffullversion
  \subsection{Complete Preferences}
\label{sec:complete}

Our proof that the decision version of $\HRSC$ is NP-hard makes use of
$\HRSC$ instances with incomplete preferences, that is, where a resident
(resp., hospital) only provides a preference order over its
acceptable hospitals (resp., residents).  In this section, we use
a simple reduction to show that the problem remains NP-hard in the
special case where all preference lists are complete.

Let $I$ denote an instance of $\HRSC$ with incomplete preferences and
where the resident sizes and hospital capacities all belong to
$\{1,2\}$.  For any resident $r$ in $I$, let $A_r$ denote the set of
hospitals that are acceptable to $r$.  Similarly, for any hospital
$h$ in $I$, let $A_h$ denote the set of residents that are acceptable
to $h$.

We now describe how to transform an arbitrary $\HRSC$ instance $I$
into an ``equivalent'' instance $I'$ of $\HRSC$ with complete
preferences and resident sizes and hospital capacities in
$\{1,2\}$. The set of hospitals in $I'$ is the same as in $I$. The set
of residents in $I'$ consists of those in $I$ plus the following
``dummy'' residents: For each capacity-$k$ hospital $h$ in $I$, we
introduce a set of $k$ size-$1$ dummy residents
$\{r_h^1,\ldots,r_h^k\}$. Each agent in instance $I'$ has complete
preferences, determined as follows.
\begin{itemize}
\item
For any resident $r$ in $I$, the preference list of $r$ in $I'$
consists of its preference list in $I$ (which contains exactly the
hospitals in $A_r$) followed by the remaining hospitals in arbitrary
order.

\item
For any dummy resident $r_h^i$, the associated preference list in
$I'$ consists of $h$ followed by all other hospitals in arbitrary
order.

\item
For any hospital $h$ in $I$, the preference list of $h$ in $I'$
consists of its preference list in $I$ (which contains exactly the
residents in $A_h$) followed by dummy residents $r_h^1$ through
$r_h^k$ (in that order), followed by the remaining (non-dummy and
dummy) residents in arbitrary order.
\end{itemize}

We define an assignment $\Match'$ in $I'$ to be \emph{special} if the
following conditions hold for any hospital $h$, where $R_h$ denotes
the set of all residents assigned to $h$ under $\Match'$: (1)
$\Size{R_h}=\Cap{h}$; (2) there is an $\ell$ in
$\{0,\ldots,\Cap{h}\}$ such that $R_h$ is the union of
$\{r_h^1,\ldots,r_h^\ell\}$ and some subset of $A_h$.

There is a natural bijection between the set of all assignments for
$I$ and the set of all special assignments for $I'$. Specifically,
given an assignment $\Match$ for $I$, we can obtain the corresponding
special assignment $\Match'$ for $I'$ by assigning, for each hospital $h$
with unused capacity $\ell$ under $\Match$, dummy residents
$r_h^1,\ldots,r_h^\ell$ to hospital $h$. And given a special assignment
$\Match'$ for $I'$, we can obtain the corresponding assignment $\Match$
for $I$ by removing any resident-hospital pairs involving dummy
residents.

The following two claims are straightforward to prove: (1) if $\Match'$
is stable for $I'$, then $\Match'$ is special; (2) if $\Match$ is an
assignment for $I$ and $\Match'$ is the corresponding special assignment for
$I'$, then $\Block{I}{\Match}=\Block{I'}{\Match'}$. It follows that $I$
admits a stable matching if and only if $I'$ admits a stable matching.

\fi

\section{Unitwise-Coalition Stability}
\label{sec:unitwise}

As discussed in Section~\ref{sec:intro}, Rodr\'{i}guez and Manlove
have considered two coalitional notions of stability in the context of
course allocation, namely first-coalition stability and coalition
stability, which correspond to $\HRSFC$ and $\HRSC$ in the
special-case setting of $\HRS$. Rodr\'{i}guez and Manlove have shown
that $\HRSFC$ is NP-hard and we have shown that $\HRSC$ is NP-hard. In
this section, we discuss a third notion of coalitional stability in
the context of $\HRS$ that is arguably more natural than either
first-coalition stability or coalition stability, and we indicate why
our NP-hardness result for $\HRSC$ extends to this third notion, which
we call unitwise-coalition stability.

To motivate this third coalitional notion of stability, consider the
following example. Suppose hospital $h$ has $\Cap{h} = 4$ and
preference list $r_1$ (most preferred), $r_2$, $r_3$, $r_4$ (least
preferred) where $\Size{r_1} = 2$, $\Size{r_2} = 1$, $\Size{r_3} = 2$,
and $\Size{r_4} = 3$. Further assume that assignment $\Match$ assigns
residents $r_2$ and $r_4$ to hospital $h$. In this scenario, it is
straightforward to verify that there is no blocking coalition (in the
sense of $\HRSC$) for $\Match$ involving hospital $h$. Under our
proposed coalitional notion of stability, we consider $(\{r_1,r_3\},
h)$ to be a blocking coalition because $h$ ``should'' prefer
$\{r_1,r_3\}$ to $\{r_2,r_4\}$: the two units of $r_1$ are preferred
to the single unit of $r_2$ and one unit of $r_4$; the two units of
$r_3$ are preferred to the two remaining units of $r_4$. More
generally, the idea is that a hospital $h$ prefers a set of residents
$R$ to a set of residents $R'$ (disjoint from $R$) if $\Size{R} \ge
\Size{R'}$ and for all $i$ in $\{1,\ldots, \Size{R'}\}$, $h$ prefers
its $i$th most preferred unit in $R$ to its $i$th most preferred unit
in $R'$.

Unitwise-coalition stability admits an equivalent definition in terms
of additive utility. Assume that the utility hospital $h$ derives from
being assigned a set of residents $R$ is equal to the sum of its
utility for each resident $r$ in $R$, where the utility for each
resident is a positive real number. Further assume that the utilities
respect the preference list of $h$, in the following sense: if $h$
prefers $r$ to $r'$, then the per-unit utility of $r$ (i.e., the
utility of $h$ for $r$ divided by $\Size{r}$) is greater than the
per-unit utility of $r'$ (i.e., the utility of $h$ for $r'$ divided by
$\Size{r'}$). We consider hospital $h$ to prefer a set of residents
$R$ to a set of residents $R'$ (disjoint from $R$) if $\Size{R} \ge
\Size{R'}$ and the utility of $h$ for $R$ exceeds the utility of $h$
for $R'$ under any utilities that respect the preference list of
$h$. It is straightforward to prove that this utility-based
formulation is equivalent to unitwise-coalition stability.

We now indicate why the NP-hardness proof of Section~\ref{sec:nph}
also applies to unitwise-coalition stability. Our proof establishes
that $\HRSC$ is NP-hard even in the special case where the resident
sizes and the hospital capacities belong to $\{1,2\}$. It is
straightforward to verify that in this special case, an assignment is
coalition-stable (i.e., stable under $\HRSC$) if and only if it is
unitwise-coalition-stable.
\iffullversion
\footnote{The set of blocking coalitions under unitwise-coalition
stability can be strictly larger than under coalition stability, but is
nonempty if and only if it is nonempty under coalition stability.}
\fi

\section{Concluding Remarks}
\label{sec:concluding-remarks}

In this paper, we presented an instance of $\HRIC$ that does not admit
a coalition-stable assignment. In addition, we established that it is
NP-hard to find a coalition-stable assignment for a given $\HRIC$
instance (or correctly report that no such assignment exists), even if
every hospital has capacity at most two and the length of the
preference list of each resident (resp., hospital) is at most three
(resp., four). As discussed in Section~\ref{sec:unitwise}, this
hardness result also holds for unitwise-coalition stability. Since
$\HRS$ generalizes $\HRIC$ and course allocation generalizes $\HRS$,
our results resolve the open questions of Rodr\'{i}guez and
Manlove~\cite{Rodriguez2025} related to finding a coalition-stable
assignment in the course allocation setting.

As remarked in Section~\ref{sec:intro}, Rodr\'{i}guez and Manlove
prove that the problem of testing whether a given assignment is
coalition-stable is co-NP-complete for course allocation, and their
argument also applies to the $\HRS$ special case. On the other hand,
we claim that the testing problem for $\HRIC$ under pair-size,
first-coalition, coalition, and unitwise-coalition stability belongs
to P. To show this, we first argue that if there is a blocking
coalition involving some hospital $h$, then there is a blocking
coalition $(R,h)$ such that $\Size{R} \le 2$. Furthermore, for any
hospital $h$ and any set of residents $R$ such that $\Size{R} \le 2$,
it is straightforward to check in polynomial time whether $(R,h)$ is a
blocking coalition. Since there are only polynomially many candidate
blocking coalitions $(R,h)$ with $\Size{R} \le 2$, the claim follows.

\bibliographystyle{splncs04}
\bibliography{references}

\iffullversion
\appendix
\renewcommand{\theHsection}{\Alph{section}}

\section{Tabular Description of the Transformation}
\label{sec:tabular-transformation}

Let $\Prop$ be a $\SAT$ instance, let $n$ be a positive integer such
that $\Prop$ has $4n$ clauses, indexed from 1 to $4n$, and $3n$
variables, indexed from 1 to $3n$. As in Section~\ref{sec:transform},
we define both an A-identifier and a B-identifier for each literal in
$\Prop$ and we define the bijection $\BijFn: \{1,\dots,4n\} \times
\{1,2,3\} \rightarrow \{1,\dots,3n\} \times \{1,2,3,4\}$ to map any
given A-identifier to the corresponding B-identifier.

\begin{table}[h]
\centering
\begin{tabular}{ccl@{\qquad}ccl}
\toprule
resident & size & preferences & hospital & capacity & preferences \\
\midrule
$a_1^j$             & $1$ & $\alpha_1^j \quad\alpha_3^j$                                     & $\alpha_1^j$      & $2$ & $a_5^j \quad a_3^j \quad a_2^j \quad a_1^j$ \\
$a_2^j$             & $2$ & $\alpha_1^j \quad\beta_1^j$                                       & $\alpha_2^j$      & $1$ & $a_3^j$ \\
$a_3^j$             & $1$ & $\alpha_1^j \quad\alpha_2^j$                                      & $\alpha_3^j$      & $2$ & $a_1^j \quad a_4^j \quad a_5^j$ \\
$a_4^j$             & $2$ & $\alpha_3^j \quad\beta_2^j$                                       & $\beta_1^j$       & $2$ & $a_2^j \quad b_2^j \quad b_1^j$ \\
$a_5^j$             & $1$ & $\alpha_3^j \quad\alpha_1^j$                                      & $\beta_2^j$       & $2$ & $a_4^j \quad b_3^j \quad b_4^j$ \\
$b_k^j = t_1^{i,\ell}$ & $1$ & $\beta_{\lceil k/2 \rceil}^j \quad\gamma^i \quad\tau_1^{i,\ell}$ & $\gamma^i$      & $2$ & $t_1^{i,1} \quad t_1^{i,2} \quad t_1^{i,3}$ \\
$s_1^{i,\ell}$      & $1$ & $\sigma_1^{i,\ell} \quad\sigma_2^{i,\ell}$                        & $\sigma_1^{i,\ell}$ & $1$ & $s_2^{i,\ell} \quad s_1^{i,\ell}$ \\
$s_2^{i,\ell}$      & $1$ & $\sigma_3^{i,\ell} \quad\sigma_2^{i,\ell} \quad\sigma_1^{i,\ell}$  & $\sigma_2^{i,\ell}$ & $2$ & $s_1^{i,\ell} \quad s_2^{i,\ell} \quad s_3^{i,\ell} \quad s_4^{i,\ell}$ \\
$s_3^{i,\ell}$      & $2$ & $\tau_2^{i,\ell} \quad\sigma_2^{i,\ell}$                          & $\sigma_3^{i,\ell}$ & $1$ & $s_4^{i,\ell} \quad s_2^{i,\ell}$ \\
$s_4^{i,\ell}$      & $1$ & $\sigma_2^{i,\ell} \quad\sigma_3^{i,\ell}$                        & $\tau_1^{i,\ell}$ & $1$ & $t_1^{i,\ell} \quad t_2^{i,\ell}$ \\
$t_2^{i,\ell}$      & $1$ & $\tau_1^{i,\ell} \quad\tau_2^{i,\ell}$                            & $\tau_2^{i,\ell}$ & $2$ & $t_2^{i,\ell} \quad t_3^{i,\ell} \quad s_3^{i,\ell}$ \\
$t_3^{i,\ell}$      & $1$ & $\tau_2^{i,\ell}$                                               & & & \\
\bottomrule \\
\end{tabular}
\caption{A tabular representation of the construction of
  Section~\ref{sec:transform} (see also
  Figure~\ref{fig:global-construction}). For each resident (resp.,
  hospital), the acceptable hospitals (resp., residents) are listed in
  decreasing order of preference.}
\label{tab:transformation}
\end{table}

\section{Preference Digraphs for the Gadgets Implicit
  in~\cite{McDermid2010}}
\label{sec:hrs-gadgets}

The NP-hardness proof for $\HRSP$ given
in~\cite[erratum]{McDermid2010} is based on a tabular representation
of the construction (see Figure 1 of \cite[erratum]{McDermid2010})
that is analogous to Table~\ref{tab:transformation} of
Appendix~\ref{sec:tabular-transformation} for our construction. In
this section, we present preference digraphs for the gadgets implicit
in the construction of~\cite{McDermid2010}, to allow the reader to
better understand the relationship between our NP-hardness proof for
$\HRSC$ and their NP-hardness proof for $\HRSP$. We do not provide a
preference digraph for the clause gadget implicit
in~\cite{McDermid2010} because it is the same as ours (see
Figure~\ref{fig:clause-gadget}), with the understanding that the
literal gadget icons correspond to Figure~\ref{fig:hrs-literal-gadget}
rather than Figure~\ref{fig:literal-gadget}. Similarly, we do not
provide the preference digraph for the global construction
of~\cite{McDermid2010} because it is the same as ours (see
Figure~\ref{fig:global-construction}), with the understanding that the
variable and literal gadget icons correspond to
Figures~\ref{fig:variable-gadget} and~\ref{fig:literal-gadget} rather
than Figures~\ref{fig:hrs-variable-gadget}
and~\ref{fig:hrs-literal-gadget}.

Figure~\ref{fig:hrs-variable-gadget} depicts the preference digraph
for the variable gadget implicit in~\cite{McDermid2010}. This
construction can be streamlined by instead applying the construction
of Figure~\ref{fig:variable-gadget}, with the understanding that the
bistable gadget icon corresponds to
Figure~\ref{fig:hrs-bistable-gadget} rather than
Figure~\ref{fig:bistable-gadget}.

The symbols used to denote specific residents and hospitals in
Figures~\ref{fig:hrs-unstable-gadget}
through~\ref{fig:hrs-variable-gadget} are based on the notation
in~\cite{McDermid2010}; the reader is referred to~\cite{McDermid2010}
for further details.

\begin{figure}
    \centering
    \begin{picture}(56.00,91.20)
\put(24.00,91.20){\makebox(0,0){\fbox{$p^k_{j,1}$}}}
\put(56.00,91.20){\makebox(0,0){$p^k_{j,2}$}}
\put(0.00,3.20){\makebox(0,0){$q^k_{j,2}$}}
\put(0.00,35.20){\makebox(0,0){\fbox{$q^k_{j,3}$}}}
\put(0.00,67.20){\makebox(0,0){$q^k_{j,1}$}}
\put(24.00,3.20){\circle*{6.40}}
\put(24.00,8.00){\vector(0,1){22.40}}
\put(24.00,35.20){\circle*{6.40}}
\put(24.00,40.00){\vector(0,1){22.40}}
\put(24.00,67.20){\circle*{6.40}}
\put(28.80,67.20){\vector(1,0){22.40}}
\put(56.00,67.20){\circle*{6.40}}
\put(56.00,62.40){\vector(0,-1){54.40}}
\put(56.00,3.20){\circle*{6.40}}
\put(51.20,3.20){\vector(-1,0){22.40}}
\end{picture}
    \caption{The preference digraph for copy $(j,k)$ of the unstable
      gadget implicit in~\cite{McDermid2010} has three residents
      $q_{j,1}^k, q_{j,2}^k, q_{j,3}^k$ and two hospitals $p_{j,1}^k,
      p_{j,2}^k$.}
    \label{fig:hrs-unstable-gadget}
\end{figure}

\begin{figure}
    \centering
    \begin{picture}(68.80,64.00)
\put(24.00,64.00){\makebox(0,0){\fbox{$p^k_{j,3}$}}}
\put(56.00,64.00){\makebox(0,0){\fbox{$p^k_{j,1}$}}}
\put(0.00,8.00){\makebox(0,0){\fbox{$q^k_{j,3}$}}}
\put(0.00,40.00){\makebox(0,0){$v(p^k_{j,3})$}}
\put(56.00,8.00){\circle*{6.40}}
\put(56.00,8.00){\line(2,1){16.00}}
\put(56.00,8.00){\line(2,-1){16.00}}
\put(72.00,0.00){\line(0,1){16.00}}
\put(60.00,8.00){\line(2,1){10.56}}
\put(60.00,8.00){\line(2,-1){10.56}}
\put(70.56,2.72){\line(0,1){10.56}}
\put(51.20,8.00){\vector(-1,0){22.40}}
\put(24.00,8.00){\circle*{6.40}}
\put(24.00,12.80){\vector(0,1){22.40}}
\put(24.00,40.00){\circle*{6.40}}
\end{picture}
    \caption{The preference digraph for copy $(j,k)$ of the literal
      gadget implicit in~\cite{McDermid2010} includes an icon
      representing a copy of the unstable gadget of
      Figure~\ref{fig:hrs-unstable-gadget} and has a total of $1 + 3 =
      4$ residents and $1 + 2 = 3$ hospitals.}
    \label{fig:hrs-literal-gadget}
\end{figure}

\begin{figure}
    \centering
    \begin{picture}(68.80,123.20)
\put(24.00,123.20){\makebox(0,0){\fbox{$h_j^2$}}}
\put(56.00,123.20){\makebox(0,0){\fbox{$h_j^1$}}}
\put(0.00,3.20){\makebox(0,0){$r_j^4$}}
\put(0.00,35.20){\makebox(0,0){\fbox{$r_j^1$}}}
\put(0.00,67.20){\makebox(0,0){\fbox{$r_j^2$}}}
\put(0.00,99.20){\makebox(0,0){$r_j^3$}}
\put(24.00,3.20){\circle*{6.40}}
\put(24.00,8.00){\vector(0,1){54.40}}
\put(24.00,67.20){\circle*{6.40}}
\put(24.00,72.00){\vector(0,1){22.40}}
\put(24.00,99.20){\circle*{6.40}}
\put(28.80,99.20){\vector(1,0){22.40}}
\put(56.00,99.20){\circle*{6.40}}
\put(56.00,94.40){\vector(0,-1){54.40}}
\put(56.00,35.20){\circle*{6.40}}
\put(56.00,30.40){\vector(0,-1){22.40}}
\put(56.00,3.20){\circle*{6.40}}
\put(51.20,3.20){\vector(-1,0){22.40}}
\end{picture}
    \caption{The preference digraph for copy $j$ of the bistable
      gadget implicit in~\cite{McDermid2010} has four residents
      $r_j^1, \dots, r_j^4$ and two hospitals $h_j^1, h_j^2$.}
    \label{fig:hrs-bistable-gadget}
\end{figure}

\begin{figure}
    \centering
    \begin{picture}(184.00,251.20)
\put(24.00,251.20){\makebox(0,0){\fbox{$h_j^F$}}}
\put(56.00,251.20){\makebox(0,0){\fbox{$h_j^4\vphantom{h_j^F}$}}}
\put(88.00,251.20){\makebox(0,0){\fbox{$h_j^2\vphantom{h_j^F}$}}}
\put(120.00,251.20){\makebox(0,0){\fbox{$h_j^1\vphantom{h_j^F}$}}}
\put(152.00,251.20){\makebox(0,0){\fbox{$h_j^3\vphantom{h_j^F}$}}}
\put(184.00,251.20){\makebox(0,0){\fbox{$h_j^T$}}}
\put(0.00,3.20){\makebox(0,0){$x_j^2$}}
\put(0.00,35.20){\makebox(0,0){$x_j^1$}}
\put(0.00,67.20){\makebox(0,0){\fbox{$r_j^5$}}}
\put(0.00,99.20){\makebox(0,0){\fbox{$r_j^1$}}}
\put(0.00,131.20){\makebox(0,0){\fbox{$r_j^2$}}}
\put(0.00,163.20){\makebox(0,0){\fbox{$r_j^6$}}}
\put(0.00,195.20){\makebox(0,0){$y_j^1$}}
\put(0.00,227.20){\makebox(0,0){$y_j^2$}}
\put(88.00,91.20){\line(1,0){32.00}}
\put(88.00,139.20){\line(1,0){32.00}}
\put(88.00,91.20){\line(0,1){48.00}}
\put(120.00,91.20){\line(0,1){48.00}}
\put(88.00,131.20){\circle*{6.40}}
\put(120.00,99.20){\circle*{6.40}}
\put(60.80,131.20){\vector(1,0){22.40}}
\put(56.00,131.20){\circle*{6.40}}
\put(56.00,158.40){\vector(0,-1){22.40}}
\put(56.00,163.20){\circle*{6.40}}
\put(28.80,163.20){\vector(1,0){22.40}}
\put(28.80,163.20){\vector(1,0){22.40}}
\put(24.00,163.20){\circle*{6.40}}
\put(24.00,190.40){\vector(0,-1){22.40}}
\put(24.00,195.20){\circle*{6.40}}
\put(24.00,222.40){\vector(0,-1){22.40}}
\put(24.00,227.20){\circle*{6.40}}
\put(147.20,99.20){\vector(-1,0){22.40}}
\put(152.00,99.20){\circle*{6.40}}
\put(152.00,72.00){\vector(0,1){22.40}}
\put(152.00,67.20){\circle*{6.40}}
\put(179.20,67.20){\vector(-1,0){22.40}}
\put(184.00,67.20){\circle*{6.40}}
\put(184.00,40.00){\vector(0,1){22.40}}
\put(184.00,35.20){\circle*{6.40}}
\put(184.00,8.00){\vector(0,1){22.40}}
\put(184.00,3.20){\circle*{6.40}}
\end{picture}
    \caption{The preference digraph for copy $j$ of the variable
      gadget implicit in~\cite{McDermid2010} includes an icon
      representing a copy of the bistable gadget of
      Figure~\ref{fig:hrs-bistable-gadget} and has a total of $6 + 4 =
      10$ residents and $4 + 2 = 6$ hospitals. This preference digraph
      can be streamlined as in Figure~\ref{fig:variable-gadget} to
      reduce the number of residents to $4+4 = 8$ and the number of
      hospitals to $2+2 = 4$.}
    \label{fig:hrs-variable-gadget}
\end{figure}
\fi

\end{document}